\newtheorem{theorem}{Theorem}
\newtheorem{property}[theorem]{Property}
\newtheorem{corollary}[theorem]{Corollary}
\newtheorem{proposition}[theorem]{Proposition}
\newtheorem{remark}[theorem]{Remark}
\newenvironment{proof}[1][Proof]{\noindent\textbf{#1.} }{\ \rule{0.5em}{0.5em}}
\definecolor{webgreen}{rgb}{0,.5,0}
\definecolor{webbrown}{rgb}{.6,0,0}
\definecolor{webyellow}{rgb}{0.98,0.92,0.73}
\definecolor{webgray}{rgb}{.753,.753,.753}
\definecolor{webblue}{rgb}{0,0,.8}
\def\dx{{dx}}
\journal{Information Sciences}
\begin{document}

\begin{frontmatter}
\setlength{\hsize}{16.5cm}



\title{On Some Entropy Functionals derived from Rényi Information Divergence}


\author{J.-F. Bercher\thanksref{label2}}
\thanks[label2]{On sabbatical leave from ESIEE-Paris, France}
\ead{jf.bercher@esiee.fr}
\address{Laboratoire des Signaux et Syst\`emes, CNRS-Univ Paris Sud-Supelec, 
91192 Gif-sur-Yvette cedex, France}

\begin{abstract}
We consider the maximum entropy problems associated with Rényi $Q$-entropy, subject to two kinds of constraints on expected values. The constraints considered are a constraint on the standard expectation, and a constraint on the generalized expectation as encountered in nonextensive statistics. The optimum maximum entropy probability distributions, which can exhibit a power-law behaviour, are derived and characterized. 

The Rényi entropy of the optimum distributions can be viewed as a function of the constraint. This defines two families of entropy functionals in the space of possible expected values. General properties of these functionals, including nonnegativity, minimum, convexity, are documented. Their relationships as well as numerical aspects are also discussed. Finally, we work out some specific cases for the reference measure $Q(x)$ and recover in a limit case some well-known entropies.

\end{abstract}

\begin{keyword}

Rényi entropy \sep Rényi divergences \sep maximum entropy principle \sep nonextensivity \sep Tsallis distributions


\end{keyword}
\end{frontmatter}


\maketitle


 

\bigskip
%



\setlength{\parindent}{0cm}
\graphicspath{{./graphics/}{.}{./figs/}}


\section{Introduction}

Consider two univariate continuous probability distributions with densities $P$ and $Q$ with respect to the Lebesgue measure. The Rényi information divergence introduced in \cite{Renyi1961} has the form
\begin{equation}
D_{\alpha}(P||Q)=-H^{(\alpha)}_Q(P)=\frac{1}{\alpha-1}\log\int_\mathcal{D} P(x)^{\alpha}Q(x)^{1-\alpha}dx,
\label{eq:defRenyi}
\end{equation}
where $\alpha$ is a positive real and $\mathcal{D}$ the domain of definition of the integral. In the discrete case, the continuous sum is replaced by a discrete one which extends on a subset $\mathcal{D}$ of integers.   The opposite $H^{(\alpha)}_Q(P)$ of the Rényi information divergence can be viewed as a Rényi entropy relative to the reference measure $Q$, and can be called $Q$-entropy. By L'Hospital's rule, Kullback divergence is recovered in the limit $\alpha\rightarrow 1$.

Applications and areas of interest in Rényi entropy are plentiful: communication and coding theory \cite{Csiszar1995}, data mining, detection, segmentation, classification \cite{Neemuchwala2005,Basseville1989}, hypothesis testing \cite{Molina2007}, characterization of signals and sequences \cite{Vinga2004,Krishnamachari2004}, signal processing \cite{Basseville1989,Baraniuk2001}, image matching and registration \cite{Neemuchwala2005,He2003}. 
Connection with the log-likelihood has been outined in \cite{Song2001}, where is also defined a measure of the intrinsic shape of a distribution which can serve as a measure of tail heaviness \cite{Nanda2007}. Rényi entropies for large families of univariate and bivariate distributions are given in  \cite{Nadarajah2003,Nadarajah2005}. Divergence measures based on entropy functions can be used in the process of inference \cite{Esteban1995}, in clustering or partionning problems \cite{Mayoral1998,Banerjee2005,Bhandari1993}.

Rényi entropy also plays a central role in the theory of multifractals, see the review \cite{Jizba2004} and \cite{Bashkirov2004}. In statistical physics, following Tsallis proposal \cite{Tsallis1988,Tsallis2002} of another entropy (which is simply related to Rényi entropy), there has been a high interest on these alternative entropies and the development of a community in ``nonextensive thermostatistics''. Indeed,  the associated maximum entropy distributions exhibit a power-law behaviour, with a remarkable agreement with experimental data, see for instance
\cite{Beck2004,Tsallis2002} and references therein. These optimum distributions, called Tsallis distributions, are similar to Generalized Pareto Distributions, which also have an high interest in other fields, namely reliability theory \cite{Asadi2006}, climatology \cite{Montfort1986}, radar imaging \cite{LaCour2004} or actuarial sciences \cite{Cebrian2003}.

Jaynes' maximum entropy principle \cite{Jaynes1957,Jaynes1982} suggests that the least biased probability distribution that describes a partially-known system is the probability distribution with maximum entropy compatible with all the available  prior information. When prior information is available in the form of constraints on expected values, the maximum entropy method amounts to minimize Kullback information divergence $D(P||Q)$ (or equivalently maximizing Shannon $Q$-entropy) subject to normalization and these an observation constraints. In the case of a single constraint on the mean of the distribution, say $E_P[X]=m$, the minimum of Kullback information in the set of all probability distributions with expectation $m$ is of course a function of $m$, denoted $\mathcal{F}(m)$ as follows
\begin{equation}
\mathcal{F}(m)=\left\{
\begin{array}
[c]{c}%
\min_{P}  D(P||Q)\\
\begin{array}[c]{rl}
 \text{s.t.~~ } & \!\! m=E_{P}[X]\\
 \text{and } & \int_\mathcal{D} P(x) dx = 1
\end{array}\\
\end{array}~\\
\right. \hspace{-0.3cm}   \label{whole-me-pb}%
\end{equation}
It is a `contracted' version of Shannon $Q$-entropy and is called a level-1 entropy functional, or rate function, in the theory of large deviations, e.g. \cite{Ellis1985}. The maximum entropy method is a widely and successful method extensively used in a large variety of problems and contexts. \\

We focus here on solutions and properties of maximum entropy problems analog to (\ref{whole-me-pb}) for the Rényi information divergence (\ref{eq:defRenyi}), and on the associated entropy functionals. The maximum Rényi-Tsallis entropy distribution, with its power law behavior, is at the heart of nonextensive statistics, but have also be considered in \cite{Golan2002,Grendar2004a}. 
In nonextensive statistics, one still consider the usual classical mean constraint, but also a `generalized' $\alpha$-expectation constraint. This `generalized' $\alpha$-expectation is in fact the expectation
with respect to the distribution
\begin{equation}
P^{\ast}(x)=\frac{P(x)^{\alpha}Q(x)^{1-\alpha}}{\int_\mathcal{D} P(x)^{\alpha
}Q(x)^{1-\alpha}dx}, \label{P-star}%
\end{equation}
that is a weighted geometric mean of $P$ and $Q$. It is nothing else but the `escort' or zooming distribution of nonextensive statistics \cite{Tsallis2002} and multifractals. 
Of course, with $\alpha=1$, the escort distribution $P^*$ reduces to $P$ and the generalized mean $E_{P^{\ast}}[X]$ reduces to the standard one.

Therefore, the maximum entropy problems associated to Rényi information divergence (\ref{eq:defRenyi}), subject to normalization and to a classical (C) or generalized (G) mean constraint states as:

\begin{equation}
\mathcal{F}_{\alpha}^{(C\text{ resp. }G)}(m)
=\left\{
\begin{array}
[c]{c}%
\min_{P}D_{\alpha}(P||Q)\\
\begin{array}[c]{rl}
 \text{s.t.~~ } (C) & \!\! m=E_{P}[X]\\
\text{ or } (G) & \! \! m=E_{P^{\ast}}[X]\\
 \text{and } & \int_\mathcal{D} P(x) dx = 1
\end{array}\\
\end{array}
\right. \label{eq:F_def}%
\end{equation}
where $\mathcal{F}_{\alpha}^{(C)}(m)$ and $\mathcal{F}_{\alpha}^{(G)}(m)$ are the level-one entropy functionals associated to Rényi $Q$-entropy for the classical an generalized constraints respectively. Since Rényi entropy reduces to Shannon's for $\alpha=1$, functionals $\mathcal{F}_{\alpha}^{(.)}(m)$ will reduce to $\mathcal{F}(m)$ when $\alpha\rightarrow1$. 

Hence, in this paper, we consider the forms and properties of maximum entropy solutions associated to Rényi $Q$-entropy, subject to two kind of constraints, as explained above. The value of the maximum entropy problems at the optimum define classes of entropy functionals $\mathcal{F}_{\alpha}^{(.)}(m)$ associated to each choice of reference $Q$, and indexed by the parameter $\alpha$. The introduction of the reference measure $Q$, and therefore the definition of functionals $\mathcal{F}_{\alpha}^{(.)}(m)$ 
 is, to the best of our knowledge, new in this setting. In section \ref{sec:sol_min_entropy},  the exact form of the probability distributions  $P$ that realize the minimum of the Rényi information divergence in the right side of (\ref{eq:F_def}) are first derived. 
Then we give some properties of these distributions and of their partition functions. We show that the entropy functionals $\mathcal{F}_{\alpha}^{(.)}(m)$ are simply linked to these partition functions. General properties of the entropy functionals, including nonnegativity, convexity, are established. We also indicate how the problems 
(\ref{eq:F_def}) can be tackled numerically, for specific values of the constraints, even thouh the maximum entropy distributions exhibit implicit relationships. A divergence in the object space, that reduces to a Bregman divergence for $\alpha\rightarrow1$ is defined. These results are illustrated in section \ref{sec:special_cases} where we study four special cases of reference $Q$, and characterize the associated entropy functionals. It is then shown that some well-known entropies are recovered.

\section{The minimum of Rényi divergence}
\label{sec:sol_min_entropy}
\def\D{\mathcal{D}}

Let us define by
\begin{equation}
P_\nu(x)=\frac{\left[1+\gamma\left(x-\bar{x}\right)\right]^{\nu}}{Z_{\nu}(\gamma,\bar{x})}Q(x),
\label{eq:defPnu}
\end{equation}
a probability density function on a subset $\D$ of $\mathbb{R}$, where $\D$ ensure that the numerator of (\ref{eq:defPnu}) is always nonnegative and its integral finite. The normalization
$Z_{\nu}(\gamma,\overline{x})$ is the partition function defined by
\begin{equation}
Z_{\nu}(\gamma,\overline{x}
)=\int_{\mathcal{D}}\left[ 1+ \gamma(x-\overline{x})\right]  ^{\nu}Q(x)dx
\label{eq:defZ}
\end{equation}
The density $P_\nu$ depends of three parameters: the exponent $\nu$ which can be considered as a shape parameter, a scale parameter $\gamma$ and a location parameter $\bar{x}$. But these parameters can be also be linked. For instance, $\bar{x}$ might be a function of $\nu$ and $\gamma$. When non ambigous, we may also denote by $E_{\nu}[X]$ the statistical mean with respect to $P_\nu(x)$.\\

With these notations, we have the following result.
\begin{theorem}~

\begin{itemize}
\item[(C)] The distribution ${P}_{C}{(x)}$ in the family (\ref{eq:defPnu}) with $\nu=\xi=\frac{1}{\alpha-1}$ and $\overline{x}=E_{P}[X]=E_{\xi}[X]$, has the minimum Rényi divergence to $Q$ 
\begin{equation}
  D_\alpha(P||Q) \geq D_\alpha(P_C||Q)
\end{equation}
for all probability distributions $P(x)$ absolutely continuous with respect to $P_C(x)$  with a given (classical) expectation $\overline{x}$.
 

\item[(G)]  The distribution ${P}_{G}{(x)}$ in the family (\ref{eq:defPnu}) with $\nu=-\xi=\frac{1}{1-\alpha}$ and $\overline{x}=E_{P_G^*}[X]=E_{-(\xi+1)}[X]$, has the minimum Rényi divergence to $Q$ 
\begin{equation}
  D_\alpha(P||Q) \geq D_\alpha(P_G||Q)
\end{equation}
for all probability distributions $P(x)$ absolutely continuous with respect to 
$P_G(x)$ with a given generalized expectation $\overline{x}$.


\end{itemize}
\label{theo_solgen}
\end{theorem}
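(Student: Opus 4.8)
The plan is to verify \emph{global} optimality of the two candidate power-law densities directly, rather than merely locating critical points through Lagrange multipliers (which I would use only to motivate the form (\ref{eq:defPnu}), with $\nu=\xi$ resp. $\nu=-\xi$ emerging from stationarity). Throughout I abbreviate $M_\alpha(P)=\int_\mathcal{D}P(x)^\alpha Q(x)^{1-\alpha}\,dx$, so that $D_\alpha(P||Q)=\frac{1}{\alpha-1}\log M_\alpha(P)$. The sign of the prefactor $\frac{1}{\alpha-1}$ reverses between the regimes $\alpha>1$ and $0<\alpha<1$, but the convexity of $u\mapsto u^\alpha$ flips at exactly the same threshold, so both regimes will conclude the same inequality $D_\alpha(P||Q)\ge D_\alpha(P_C||Q)$ (resp. $P_G$).

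For the classical case (C) I would apply the tangent-line (Gibbs-type) inequality to $g(u)=u^\alpha Q(x)^{1-\alpha}$ at the point $u=P_C(x)$: for $\alpha>1$, $P^\alpha Q^{1-\alpha}\ge P_C^\alpha Q^{1-\alpha}+\alpha P_C^{\alpha-1}Q^{1-\alpha}(P-P_C)$, with the reverse inequality for $0<\alpha<1$. The decisive simplification is that $\nu=\xi=\frac{1}{\alpha-1}$ gives $\xi(\alpha-1)=1$, so the multiplier $P_C^{\alpha-1}Q^{1-\alpha}=(P_C/Q)^{\alpha-1}$ collapses to the \emph{affine} function $Z_\xi^{1-\alpha}\,[1+\gamma(x-\bar x)]$. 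Integrating, the cross term is then a constant times $\int_\mathcal{D}[1+\gamma(x-\bar x)](P-P_C)\,dx$, which splits into $\int_\mathcal{D}(P-P_C)\,dx=0$ by normalization and $\gamma\int_\mathcal{D}(x-\bar x)(P-P_C)\,dx=0$ since $E_P[X]=E_{P_C}[X]=\bar x$ by the constraint and the choice of $\bar x$. Hence $M_\alpha(P)\ge M_\alpha(P_C)$ for $\alpha>1$ and $M_\alpha(P)\le M_\alpha(P_C)$ for $\alpha<1$; multiplying by $\frac{1}{\alpha-1}$ gives $D_\alpha(P||Q)\ge D_\alpha(P_C||Q)$ in both regimes, with equality iff $P=P_C$ a.e.

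The generalized case (G) is the main obstacle, because the escort-mean constraint $E_{P^*}[X]=\bar x$ is nonlinear in $P$: it reads $\int_\mathcal{D}(x-\bar x)P^\alpha Q^{1-\alpha}\,dx=0$ and bears on $P^\alpha Q^{1-\alpha}$, not on $P$, so a tangent-line expansion at $P_G$ no longer produces a cross term the constraints can annihilate. Instead I would pivot to the non-negativity of the R\'enyi divergence \emph{between $P$ and $P_G$}. Here $\nu=-\xi=\frac{1}{1-\alpha}$ makes $\nu(\alpha-1)=-1$, so $1+\gamma(x-\bar x)=Z_\nu^{1-\alpha}(P_G/Q)^{1-\alpha}=Z_\nu^{1-\alpha}P_G^{1-\alpha}Q^{\alpha-1}$. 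Using the escort constraint, $\int_\mathcal{D}[1+\gamma(x-\bar x)]P^\alpha Q^{1-\alpha}\,dx=M_\alpha(P)$; substituting the previous identity converts the left side into $Z_\nu^{1-\alpha}\int_\mathcal{D}P^\alpha P_G^{1-\alpha}\,dx$. This yields the exact relation $\int_\mathcal{D}P^\alpha P_G^{1-\alpha}\,dx=Z_\nu^{\alpha-1}M_\alpha(P)$, valid for every $P$ satisfying the escort constraint, and therefore the clean identity $D_\alpha(P||P_G)=D_\alpha(P||Q)+\log Z_\nu$.

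Finally, since $P$ and $P_G$ are both normalized densities, non-negativity of the R\'enyi divergence gives $D_\alpha(P||P_G)\ge0$, whence $D_\alpha(P||Q)\ge-\log Z_\nu=D_\alpha(P_G||Q)$, with equality iff $P=P_G$. I would confirm the last equality by the direct evaluation $M_\alpha(P_G)=Z_\nu^{1-\alpha}$, which rests on the self-consistency $\bar x=E_{-(\xi+1)}[X]$ imposed in the statement (equivalently $Z_\nu=Z_{\nu-1}$, since the escort constraint applied to $P_G$ is exactly $\gamma\int_\mathcal{D}(x-\bar x)[1+\gamma(x-\bar x)]^{\nu-1}Q\,dx=0$). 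The residual technical points to dispatch are the absolute-continuity hypotheses that make the integrals and the pointwise inequalities well defined, and the verification that the prescribed $\bar x$ are precisely the conditions killing the cross term in (C) and validating the escort identity in (G).
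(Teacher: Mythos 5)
Your proposal is correct and follows essentially the same route as the paper's Appendix~A: for (C), your tangent-line (Gibbs) inequality for $u\mapsto u^\alpha$, with the crucial collapse of $P_C^{\alpha-1}Q^{1-\alpha}$ to the affine function $Z_\xi^{1-\alpha}\left[1+\gamma(x-\bar x)\right]$ and the cross term killed by normalization and the mean constraint, is exactly the paper's functional Bregman-divergence argument (the Bregman gap \emph{is} the deviation from the tangent line), including the same split between the regimes $\alpha\in(0,1)$ and $\alpha>1$. For (G), your identity $D_\alpha(P||P_G)=D_\alpha(P||Q)+\log Z_{-\xi}=D_\alpha(P||Q)-D_\alpha(P_G||Q)$, obtained from the escort constraint and the self-consistency $Z_{-\xi}=Z_{-(\xi+1)}$, followed by nonnegativity of the R\'enyi divergence, is precisely the paper's Pythagorean-equality argument.
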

 
\begin{corollary}
The solution to the minimization of Rényi divergence in (\ref{eq:F_def}) is as given in theorem \ref{theo_solgen} for the particular values $\gamma^*$ of $\gamma$ such that $\overline{x}=m$.
\end{corollary}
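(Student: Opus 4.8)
The plan is to obtain the corollary as a direct specialization of Theorem~\ref{theo_solgen} to the constraint value imposed in (\ref{eq:F_def}). First I would note that the feasible set of (\ref{eq:F_def}) under constraint $(C)$ (resp. $(G)$) is exactly the family of normalized densities whose classical mean $E_{P}[X]$ (resp. generalized mean $E_{P^{*}}[X]$) equals $m$. Theorem~\ref{theo_solgen} already singles out, for any prescribed value $\overline{x}$ of that mean, the member $P_{C}$ (resp. $P_{G}$) of the family (\ref{eq:defPnu}) that minimizes $D_{\alpha}(P||Q)$ over all competitors sharing that mean. Solving (\ref{eq:F_def}) therefore reduces to invoking the theorem with $\overline{x}=m$.

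The second step is to pin down the free parameter. With the shape exponent fixed by $\alpha$ ($\nu=\xi$ in case $(C)$, $\nu=-\xi$ in case $(G)$), the family (\ref{eq:defPnu}) still carries the pair $(\gamma,\overline{x})$, and the self-consistency demanded by the theorem---$\overline{x}=E_{\xi}[X]$ in case $(C)$, $\overline{x}=E_{-(\xi+1)}[X]$ in case $(G)$---is one relation between them. Imposing additionally $\overline{x}=m$ reduces this to a single scalar equation for $\gamma$; I call a root $\gamma^{*}$. Substituting $(\gamma^{*},m)$ into $P_{C}$ (resp. $P_{G}$) yields a density that is feasible for (\ref{eq:F_def}) by construction and, by Theorem~\ref{theo_solgen}, attains the minimum over the entire feasible set. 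This both exhibits the claimed optimizer and identifies the optimal value, $\mathcal{F}_{\alpha}^{(C)}(m)=D_{\alpha}(P_{C}||Q)$ and $\mathcal{F}_{\alpha}^{(G)}(m)=D_{\alpha}(P_{G}||Q)$, evaluated at $\gamma^{*}$.

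The crux, and the part I expect to require real work, is the solvability and uniqueness of the moment-matching equation for $\gamma^{*}$. Writing it in case $(C)$ as $\int_{\mathcal{D}}(x-m)\,[1+\gamma(x-m)]^{\xi}Q(x)\,dx=0$ (with exponent $-(\xi+1)$ in case $(G)$), I would show the left-hand side is strictly monotone in $\gamma$: its $\gamma$-derivative is a fixed-sign multiple of the variance-type quantity $\int_{\mathcal{D}}(x-m)^{2}[1+\gamma(x-m)]^{\mu-1}Q(x)\,dx$, where $\mu$ is the relevant exponent. Strict monotonicity, combined with the limiting behaviour at the endpoints of the admissible $\gamma$-interval, then gives a unique root $\gamma^{*}$ for each $m$ in the interior of the attainable range, with the degenerate value $\gamma^{*}=0$ (hence $P=Q$) arising exactly when $m=E_{Q}[X]$. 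Convergence of $Z_{\nu}(\gamma,m)$ and admissibility of $\mathcal{D}$ near $\gamma^{*}$ would be checked separately to ensure $P_{C}$, $P_{G}$ are genuine densities.
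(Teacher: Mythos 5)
Your first two steps coincide with what the paper itself does for this corollary: it is read off from Theorem~\ref{theo_solgen} by imposing $\overline{x}=m$ in the self-consistency relation, which turns the remaining freedom into a scalar root-finding problem for $\gamma$, and the paper simply points back to the proof of the theorem. Had you stopped there, your proposal would be essentially the paper's argument.

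The third step, which you yourself call the crux, contains a genuine error: the moment-matching equation does \emph{not} have a unique root in general, and your monotonicity argument cannot deliver one. The fixed-sign-derivative argument is valid only on each connected component of the set of admissible $\gamma$, and in the discrete case with negative exponent (classical constraint with $\alpha\in(0,1)$, since $\xi+1=\alpha/(\alpha-1)<0$; generalized constraint with $\alpha>1$) the partition function $Z_{\nu}(\gamma,m)$ has singularities at $\gamma=1/(m-k)$ for every integer $k$ in the support, so the admissible set is a union of disjoint segments, each of which may contain a root. The paper's Poisson example (Fig.~\ref{Fig_Pois_dual_exampl}, $\alpha=0.5$, $m=1.15$) exhibits exactly this: two maximizers of $D_C(\gamma)$, at $\gamma=0.35$ and $\gamma=1.24$, i.e.\ two distinct roots of your equation. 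This matters for the corollary because Theorem~\ref{theo_solgen} asserts optimality of $P_C$ (resp.\ $P_G$) only among distributions \emph{absolutely continuous with respect to it}; distinct roots correspond to candidate optima with different supports, so no single root is automatically optimal over the whole feasible set of (\ref{eq:F_def}). The paper resolves the selection in Proposition~\ref{prop:dualf}: $\gamma^{*}$ is the \emph{minimum of the maximizers} of $-\log Z_{\xi+1}(\gamma,m)$ (resp.\ $-\log Z_{-\xi}(\gamma,m)$), the comparison between roots resting on the domination ordering of their supports. Your uniqueness claim silently skips precisely the case where this selection is needed.
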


It is important to emphasize that $\overline{x}$ is here a
statistical mean, and {not} the constraint $m$, and as such a function of $\gamma$. 

\begin{proof} See Appendix \ref{proof:theo}
\end{proof}

\begin{remark}
 \label{rem:limcase}
When $\alpha$ tends to 1, $|\nu|$ tends to $+\infty$. Let us introduce $\tilde\gamma$ such that 
$\gamma=\tilde\gamma/\nu$. Then 
\begin{equation}
P_\nu(x)=e^{\nu\log\left[1+\frac{\tilde\gamma}{\nu}\left(x-\bar{x}\right)\right] - \log{Z_{\nu}(\tilde\gamma,\bar{x})}} Q(x), 
\end{equation}
 and
\begin{equation}
 \lim_{|\nu|\rightarrow+\infty}P_\nu(x)= e^{{\tilde\gamma}\left(x-\bar{x}\right)- \log{Z_{\nu}(\tilde\gamma,\bar{x})}}Q(x), 
\label{eq:Plim}
\end{equation}
that is the standard exponential, which is the well-known solution of the minimisation of Kullback-Leibler divergence subject to a constraint on an expected value \cite[Theo 2.1, page 38]{Kullback1959}. In this case, the log-partition function becomes
\begin{equation}
 \label{eq:Zlim}
 \lim_{|\nu|\rightarrow+\infty} \log{Z_{\nu}(\tilde\gamma,\bar{x})} = \tilde\gamma \bar{x} - \log\int_\mathcal{D} e^{\tilde\gamma x}Q(x) dx 
\end{equation}

\end{remark}


Properties of entropy functionals $\mathcal{F}_{\alpha}^{(C)}(m)$ and $\mathcal{F}_{\alpha}^{(G)}(m)$  are of course linked to the properties of the optimum distribution (\ref{eq:defPnu}) and its partition function (\ref{eq:defZ}). In Property \ref{prop:Zsucc}, we characterize partition functions of successive exponents, which enables to derive the expression of the Rényi entropy associated to the optimum distribution. In Proposition \ref{prop:derivate}, we give the expression of the derivative of the partition function with respect to $\gamma$. Since the optimum distribution (\ref{eq:defPnu}) is `self-referential' (because it depends of its mean, which gives an implicit relation), direct determination of its parameters is difficult. It could rely on tabulation or on iterative techniques \cite{Tsallis1998}, that still suppose that the solution is an attractive fixed point. We define in Proposition \ref{prop:dualf} two functionals whose maximization provide the $\gamma$ parameter of the optimum distributions associated to the classical and generalized mean constraint. Then general properties of nonnegativity, minimum, convexity are then given in Proposition \ref{prop:Fgeneralproperties}. We also show that the two entropy optimization problems are related and that functionals $\mathcal{F}_{\alpha}^{(.)}(m)$ obey a special symmetry. Finally, we define a divergence in the space of possible means.

\begin{property}
\label{prop:Zsucc}  Partition functions of successive exponents are linked by
\begin{equation}
Z_{\nu+1}(\gamma,\overline{x})=E_{\nu+1-k}\left[  \left(  \gamma
(x-\overline{x})+1\right)  ^{k}\right]  Z_{\nu+1-k}(\gamma,\overline{x}).
\end{equation}
An interesting particular case is for k=1:
\begin{equation}
Z_{\nu+1}(\gamma,\overline{x})=E_{\nu}\left[  \gamma(x-\overline{x})+1\right]
Z_{\nu}(\gamma,\overline{x}).
\label{eqsucc}
\end{equation}
\end{property}
This is easily checked by direct calculation.
As a direct consequence, we may also observe that
$Z_{\nu+1}(\gamma,\overline{x})=Z_{\nu}(\gamma,\overline{x})$ if and only if
$\overline{x}=E_{\nu}\left[  x\right].$ When $\overline{x}$ is a fixed
parameter $m$, this will be only true for a special value $\gamma^{\ast}$ such
that $E_{\nu}\left[  x\right]=m$.

Now, using  (\ref{eqsucc}) in Property \ref{prop:Zsucc}, it is possible to give the expression of the Rényi divergence 
associated to the distribution (\ref{eq:defPnu}) and in particular to the solutions $P_C$ and $P_G$ of problems (\ref{eq:F_def}):
\begin{property} \label{prop:val_opt_renyi} The Rényi information divergence associated to the optimum distributions (\ref{eq:defPnu}) in theorem \ref{theo_solgen} is (C) $D_\alpha(P||Q) = -\log Z_{\xi}(\gamma,\overline{x})=-\log Z_{\xi+1}(\gamma,\overline{x})$, and (G)
$D_\alpha(P||Q) = -\log Z_{-\xi}(\gamma,\overline{x})=-\log Z_{-(\xi+1)}(\gamma,\overline{x})$.
\end{property}

\begin{proof}

The Rényi entropy associated to (\ref{eq:defPnu}) writes
\begin{eqnarray*}
D_{\alpha}(P||Q) & = & \frac{1}{\alpha-1}\log\int P(x)^{\alpha}Q(x)^{1-\alpha} dx\\
 & = & \frac{1}{\alpha-1}\log\int\left(1+\gamma\left(x-\bar{x}\right)\right)^{\alpha\nu}Q(x)dx-\frac{\alpha}{\alpha-1}\log Z_{\nu}(\gamma,\bar{x}),
\end{eqnarray*}
that simply reduces to
\[
D_{\alpha}(P||Q)=\frac{1}{\alpha-1}\log Z_{\alpha\nu}(\gamma,\bar{x})-\frac{\alpha}{\alpha-1}\log Z_{\nu}(\gamma,\bar{x}).
\]
(C) In one hand, if $\nu=\xi=\frac{1}{\alpha-1}$, then $\alpha\nu=\frac{\alpha}{\alpha-1}=\xi+1$,
and $D_{\alpha}(P||Q)=\frac{1}{\alpha-1}\log Z_{\xi+1}(\gamma,\bar{x})-\frac{\alpha}{\alpha-1}\log Z_{\xi}(\gamma,\bar{x})$
Therefore, when $\bar{x}=E_{\xi}\left[x\right]$, then (\ref{eqsucc}) gives $Z_{\xi+1}(\gamma,\bar{x})=Z_{\xi}(\gamma,\bar{x})$, and it simply remains
\[
 D_\alpha(P||Q) = -\log Z_{\xi}(\gamma,\overline{x})=-\log Z_{\xi+1}(\gamma,\overline{x}).
\]
 
(G) In the other hand, if $\nu=-\xi=\frac{1}{1-\alpha}$, then $\alpha\nu=\frac{\alpha}{1-\alpha}=-\xi-1$,
and $D_{\alpha}(P||Q)=\frac{1}{\alpha-1}\log Z_{-(\xi+1)}(\gamma,\bar{x})-\frac{\alpha}{\alpha-1}\log Z_{-\xi}(\gamma,\bar{x})$.
When $\bar{x}=E_{-(\xi+1)}\left[x\right]$, we have
$Z_{-\xi}(\gamma,\bar{x})=Z_{-(\xi+1)}(\gamma,\bar{x})$ according to  (\ref{eqsucc}) and  it remains
\[
D_\alpha(P||Q) = -\log Z_{-\xi}(\gamma,\overline{x})=-\log Z_{-(\xi+1)}(\gamma,\overline{x}).
\]
\end{proof}

Since the Rényi information divergence of distributions (\ref{eq:defPnu}) is simply the log-partition function, it will be useful to examine the behaviour of the partition function with respect to the parameter $\gamma$. Hence, the following proposition gives the expression of the derivative of the partition function.

\begin{proposition}
\label{prop:derivate} For the partition function (\ref{eq:defZ}) with domain of definition $\mathcal{D}$, the derivative with respect to $\gamma$ of the partition function with characteristic exponent $\nu$ is given by
\begin{equation}
\frac{d}{d\gamma}Z_{\nu}(\gamma,\overline{x})=\nu  \left(
E_{\nu-1}\left[  x-\overline{x}\right]  -\gamma\frac{d\overline{x}}{d\gamma
}\right)  Z_{\nu-1}(\gamma,\overline{x}). \label{eq:Zderivate}
\end{equation}
if (a) the domain $\mathcal{D}$ does not depend of $\gamma$, or (b) on subsets of $\gamma$ such that the domain increment $\delta\mathcal{D}$ associated  to the variation $\delta\gamma$ remains empty,  or (c) for $\nu>0$ in the continuous case or $\nu>1$ in the discrete case.
\end{proposition}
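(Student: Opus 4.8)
The plan is to differentiate the integral (\ref{eq:defZ}) directly, keeping in mind that $\gamma$ enters in three distinct ways: explicitly through the factor $[1+\gamma(x-\overline{x})]^{\nu}$, implicitly through the mean $\overline{x}=\overline{x}(\gamma)$, and implicitly through the domain $\mathcal{D}=\mathcal{D}(\gamma)$, which is delimited by the admissibility condition $1+\gamma(x-\overline{x})\geq 0$. The hypotheses (a), (b), (c) are exactly the conditions under which the last of these effects can be ignored.

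First I would differentiate under the integral sign, treating the boundary as fixed. Since
\[
\frac{\partial}{\partial\gamma}\bigl[1+\gamma(x-\overline{x})\bigr]
=(x-\overline{x})-\gamma\frac{d\overline{x}}{d\gamma},
\]
the chain rule gives
\[
\frac{\partial}{\partial\gamma}\bigl[1+\gamma(x-\overline{x})\bigr]^{\nu}
=\nu\bigl[1+\gamma(x-\overline{x})\bigr]^{\nu-1}\!\left((x-\overline{x})-\gamma\frac{d\overline{x}}{d\gamma}\right).
\]
Integrating this against $Q(x)$ over $\mathcal{D}$ and recognizing, from the definitions (\ref{eq:defPnu})--(\ref{eq:defZ}), that $\int_{\mathcal{D}}[1+\gamma(x-\overline{x})]^{\nu-1}(x-\overline{x})\,Q(x)\,dx=E_{\nu-1}[x-\overline{x}]\,Z_{\nu-1}(\gamma,\overline{x})$ while the remaining piece is $\gamma\frac{d\overline{x}}{d\gamma}\,Z_{\nu-1}(\gamma,\overline{x})$, yields (\ref{eq:Zderivate}) at once. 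This part is purely mechanical; no identity beyond the definitions is required.

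The delicate point, and the real content of the hypotheses, is the justification of the two interchanges: differentiating under the integral sign, and discarding the boundary contribution that appears when $\mathcal{D}$ itself varies. A Leibniz-type argument shows the latter is the integrand $[1+\gamma(x-\overline{x})]^{\nu}Q(x)$ evaluated at the edge of $\mathcal{D}$, where $1+\gamma(x-\overline{x})=0$, multiplied by the velocity of that edge. Under (a) the edge is immobile and under (b) its displacement sweeps no new mass, so the term is absent by construction. Under (c) I would argue it vanishes on its own: for $\nu>0$ the factor $[1+\gamma(x-\overline{x})]^{\nu}$ tends continuously to $0$ at the edge, killing the boundary contribution, while the exponent $\nu-1>-1$ keeps the differentiated integrand $[1+\gamma(x-\overline{x})]^{\nu-1}$ locally integrable, so that $Z_{\nu-1}$ is finite and the differentiation under the integral is legitimate.

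I expect the main obstacle to be the discrete case, where $\mathcal{D}$ is a set of integers that can change only by whole jumps rather than continuously. There (\ref{eq:Zderivate}) must be read as the derivative on each $\gamma$-interval over which the admissible set is constant, and the stronger requirement $\nu>1$ is precisely what forces $\nu-1>0$, so that a lattice point crossing $1+\gamma(x-\overline{x})=0$ contributes a derivative term behaving like $[1+\gamma(x-\overline{x})]^{\nu-1}\to 0$. This guarantees that $Z_{\nu}$ stays continuously differentiable across such transitions and that no extra jump term spoils the formula. Controlling this boundary and continuity bookkeeping, rather than the algebra of the derivative itself, is where the care must be spent.
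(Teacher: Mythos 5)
Your proposal is correct and follows essentially the same route as the paper's proof: both split $\frac{d}{d\gamma}Z_{\nu}(\gamma,\overline{x})$ into a fixed-domain term (differentiation under the integral or sum, accounting for $\frac{d\overline{x}}{d\gamma}$) plus a domain-variation term, and both show the latter vanishes because the integrand $[1+\gamma(x-\overline{x})]^{\nu}$ vanishes at the moving edge, with the exponent bookkeeping giving $\nu>0$ in the continuous case and $\nu>1$ in the discrete case. The only difference is presentational: the paper estimates the domain-increment contribution from first principles (difference quotients and a change of variables showing the increment integral is of order $\delta\gamma^{1+\nu}$), whereas you invoke the Leibniz rule and a local-integrability argument; the decomposition, the hypotheses (a)--(c), and the vanishing mechanism are the same.
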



\begin{proof} See Appendix \ref{proof:propderivate}
\end{proof}

Using this proposition on the derivative of the partition function and Property \ref{prop:Zsucc} on the link between partitions functions of succesive exponents, we readily have
\begin{property}
\label{prop:dlog}
If $\overline{x}=E_{\nu-1}\left[  X\right],$ then, with the same conditions as in proposition \ref{prop:derivate}:
\begin{equation}
\frac{d}{d\gamma}\log Z_{\nu}%
(\gamma,\overline{x})=-\gamma \nu \frac{d\overline{x}}{d\gamma},
\label{eq:dlogdgamma}
\end{equation}
and
\begin{equation}
\frac{d}{d\overline{x}}\log Z_{\nu}%
(\gamma,\overline{x})=-\gamma \nu.
\label{eq:dlogdx}
\end{equation}
\end{property}
This is immediately checked using (\ref{eqsucc}) and  (\ref{eq:Zderivate}) with $\overline{x}=E_{\nu-1}[X]$.  It is now interesting to consider the special case where $\overline{x}$ is a fixed value, say $m$. Then, it is immediate to check that the extrema of the function $\log Z_\nu(\gamma,m)$ occur for $\gamma^*$ such that $m=E_{\nu-1}[X]$:
\begin{property}
\label{prop:maxlogZ}
If $\overline{x}$ is a fixed value $m$, then
\begin{equation}
\left. \frac{d}{d\gamma}\log Z_{\nu}%
(\gamma,m)\right|_{\gamma=\gamma^*}=0.
\end{equation}
if and only if $\gamma^*$ is such that $m=E_{\nu-1}\left[  X\right]$.
\end{property}

This result is important because it provides an easy way to find the value of the parameter $\gamma$ of the optimum distributions (\ref{eq:defPnu}) that solves the maximum entropy problems (\ref{eq:F_def}).
\begin{proposition} 
\label{prop:dualf}
The values $\gamma^*$ of the parameter $\gamma$ of the optimum distributions that solve the maximum entropy problems (\ref{eq:F_def}) are the minimum of the maximizers  of
\begin{align}
 D_C(\gamma) &=-\log Z_{\xi+1}(\gamma,m) \label{eq:Dc} \\
 D_G(\gamma)& =-\log Z_{-\xi}(\gamma,m) \label{eq:Dg}
\end{align}
where the two partitions functions involved are convex, possibly on several well defined intervals. Then, the entropy functionals $\mathcal{F}_\alpha^{(.)}$ are simply given by
\begin{equation}
\mathcal{F}_\alpha^{(C\text{~resp.~} G)}(m) = D_{C \text{~resp.~} G}(\gamma^*).
\end{equation}
\end{proposition}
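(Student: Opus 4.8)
The plan is to reduce the determination of $\gamma^*$ to an unconstrained one-dimensional variational problem in $\gamma$ and then to read off $\mathcal{F}_\alpha^{(\cdot)}(m)$ from its optimal value. First I would recall from Property~\ref{prop:val_opt_renyi} that, once the location parameter is frozen at the prescribed value $m$, the R\'enyi divergence of a member of the family~(\ref{eq:defPnu}) that meets the constraint equals exactly $-\log Z_{\xi+1}(\gamma,m)=D_C(\gamma)$ in the classical case and $-\log Z_{-\xi}(\gamma,m)=D_G(\gamma)$ in the generalized case. The essential observation is that $D_C$ and $D_G$ are honest functions of the single variable $\gamma$, with $m$ a \emph{fixed} parameter (not the statistical mean), so that the constrained problem~(\ref{eq:F_def}) becomes a search over $\gamma$ alone.

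Next I would locate the stationary points. Applying Property~\ref{prop:maxlogZ} with $\nu=\xi+1$ shows that $D_C'(\gamma)=0$ precisely when $m=E_\xi[X]$, which is the classical constraint $\bar x=E_P[X]=E_\xi[X]$ of Theorem~\ref{theo_solgen}(C); applying it with $\nu=-\xi$ shows that $D_G'(\gamma)=0$ precisely when $m=E_{-(\xi+1)}[X]$, the generalized constraint of Theorem~\ref{theo_solgen}(G). Thus the stationary points of $D_C$ and of $D_G$ are in exact correspondence with the admissible parameters singled out by the corollary to Theorem~\ref{theo_solgen}, and evaluating $D_C$ (resp. $D_G$) at such a point returns the attained divergence.

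To classify these stationary points I would compute the second derivative. Differentiating once more (which is legitimate under hypotheses (a)--(c) of Proposition~\ref{prop:derivate}) and writing $g(x)=(x-m)/\big(1+\gamma(x-m)\big)$, one finds for a generic exponent $\beta$ that $\frac{d^2}{d\gamma^2}\log Z_\beta(\gamma,m)=\beta\big[(\beta-1)\,\mathrm{Var}_\beta(g)-\big(E_\beta[g]\big)^2\big]$, where $E_\beta$ and $\mathrm{Var}_\beta$ denote mean and variance under the density $P_\beta$. For the exponents occurring here one always has either $\beta<0$ or $\beta>1$: when $\beta<0$ (case (C) with $\alpha<1$ and case (G) with $\alpha>1$) the bracket is nonpositive and the whole product is nonnegative, so $\log Z_\beta$ is globally convex and $D_C,D_G$ globally concave; when $\beta>1$ (case (C) with $\alpha>1$ and case (G) with $\alpha<1$) convexity holds only on the intervals where $(\beta-1)\mathrm{Var}_\beta(g)\ge\big(E_\beta[g]\big)^2$, which is exactly the meaning of ``possibly on several well defined intervals.'' On each such interval the stationary point is a local maximizer of $D_C$ (resp. $D_G$).

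Finally I would select the correct maximizer and conclude. Since the maximum entropy problem \emph{minimizes} the divergence while $D_C(\gamma)$ \emph{equals} that divergence at every admissible (stationary) $\gamma$, the genuine solution is the maximizer yielding the smallest value, i.e. the ``minimum of the maximizers''; evaluating there gives $\mathcal{F}_\alpha^{(C)}(m)=D_\alpha(P_C||Q)=D_C(\gamma^*)$ by Property~\ref{prop:val_opt_renyi} together with the corollary, and likewise $\mathcal{F}_\alpha^{(G)}(m)=D_G(\gamma^*)$. The main obstacle will be the convexity analysis in the regime $\beta>1$: there one only obtains piecewise concavity, so care is needed both to certify that each admissible $\gamma$ actually lies inside a concavity interval of $D_C$ (hence is a true local maximizer rather than a saddle or boundary point) and to confirm that choosing the minimal-value maximizer reproduces the \emph{global} minimum of the R\'enyi divergence rather than a spurious competing local extremum.
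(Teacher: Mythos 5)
Your skeleton matches the paper's proof: reduce to stationary points of $D_C,D_G$ via Property~\ref{prop:maxlogZ}, classify them by a second-derivative argument, select the minimum of the maximizers, and read off $\mathcal{F}_\alpha^{(.)}(m)$ from Property~\ref{prop:val_opt_renyi}. The divergence---and the genuine error---is in the convexity step. The paper never analyses $\frac{d^2}{d\gamma^2}\log Z_\nu$; it computes
$\frac{d^2}{d\gamma^2}Z_{\nu}(\gamma,m)=\nu(\nu-1)\,E_{\nu-2}\!\left[(X-m)^2\right]Z_{\nu-2}(\gamma,m)$
and observes that for the two exponents actually involved, $\nu=\xi+1$ and $\nu=-\xi$, the factor $\nu(\nu-1)$ equals $\alpha/(\alpha-1)^2>0$. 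Hence the partition function \emph{itself} is convex on every connected component of its domain, uniformly in both regimes $\nu<0$ and $\nu>1$; since $-\log$ is decreasing, $D_C$ and $D_G$ are unimodal on each component. The phrase ``possibly on several well defined intervals'' refers to those components: in the discrete case with $\nu<0$, $Z_\nu(\gamma,m)$ has singularities at $\gamma=\frac{1}{m-k}$, so its domain is a union of segments and $D$ can carry one maximum per segment.

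Your log-based analysis gets the two regimes exactly backwards. For $\beta<0$ you claim $D$ is ``globally concave,'' but this is precisely the case (classical constraint with $\alpha\in(0,1)$, generalized with $\alpha>1$) where, for a discrete reference, the singularities fragment the domain and several local maxima occur---this is the paper's Poisson example of Fig.~\ref{Fig_Pois_dual_exampl}, where $\alpha=0.5$ gives $\xi+1=-1<0$---and it is the entire reason the proposition speaks of the ``minimum of the maximizers.'' Conversely, for $\beta>1$ you attribute the multi-interval structure to the sign condition $(\beta-1)\mathrm{Var}_\beta(g)\ge\left(E_\beta[g]\right)^2$, whereas in that regime (continuous case, or discrete with $\nu>0$) there is in fact a single maximum and no caveat is needed. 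Note also that your flagged ``main obstacle'' dissolves once one works with $Z$ rather than $\log Z$: at any stationary point $E_\beta[g]=0$, so your own (correct) formula gives $\frac{d^2}{d\gamma^2}\log Z_\beta=\beta(\beta-1)E_\beta[g^2]\ge 0$ there, i.e. every stationary point is automatically a local maximum of $D$, and uniqueness per component follows from convexity of $Z_\beta$; no certification of membership in a ``concavity interval'' is required.
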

\begin{proof}
 Indeed, Theorem \ref{theo_solgen} and its corrolary indicates that the solution for the classical constraint (C) is obtained for $\overline{x}=m=E_{\xi}\left[  X\right]$ and by $\overline{x}=m=E_{-\xi-1}\left[  X\right]$ for the generalized constraint (G). Then by Property \ref{prop:maxlogZ} it suffices to look for the extrema of $D_C(\gamma)=-\log Z_{\xi+1}(\gamma,m)$ in the first case or of $D_G(\gamma)=-\log Z_{-\xi}(\gamma,m)$ in the second case.
With similar conditions of  derivation as in Proposition \ref{prop:derivate} the second derivative of the partition function with respect to $\gamma$ writes
\begin{align}
\frac{d^2 Z_\nu(\gamma,m)}{d\gamma^2} & = \nu(\nu-1)\int_\mathcal{D} \left( x-m\right)^2
\left[1+\gamma\left(x-\bar{x}_{\gamma}\right)\right]^{\nu-2} Q(x) dx \\
& = \nu(\nu-1) E_{\nu-2}\left[(X-m)^2\right] Z_{\nu-2}(\gamma,m).
\end{align}
For $\nu=\xi+1$ and $\nu=-\xi$, the factor $\nu(\nu-1)$ reduces to $\frac{\alpha}{(\alpha-1)^2}$. Since $\alpha$ is positive, the second derivative is always positive and the partition functions $Z_{\xi+1}(\gamma,m)$ and $Z_{-\xi}(\gamma,m)$ are convex on their domain of definition. On these domains, the functionals in (\ref{eq:Dc}) and (\ref{eq:Dg}) are then unimodal and their extrema are maxima.

In the discrete case and for $\nu<0$, $Z_\nu(\gamma,m)$ has singularities for all $\gamma=\frac{1}{m-k}$, where $k$ is an integer in the support of the distribution. Therefore,  $Z_\nu(\gamma,m)$ is only defined on segments $\left( \frac{1}{m-k}, \frac{1}{m-k-1} \right)$, for $m \not \in (k+1,k)$), and $\left( \frac{1}{m-k-1}, \frac{1}{m-k} \right)$ for $m \in (k+1,k)$. In such a case, $-\log Z_\nu(\gamma,m)$ may present several maxima. The situation $\nu<0$ occurs for the classical constraint when $\alpha\in (0,1)$ (since the index $\xi+1=\alpha/(\alpha-1)$ is negative), and for the generalized constraint when $\alpha>1$. An example of functional $D_C(\gamma)$ with $\alpha=0.5$ in the case of a Poisson distribution is reported in Fig.~\ref{Fig_Pois_dual_exampl}. In the $\nu>0$ discrete case 
or in the continuous case, there is a single maximum.

Finally, since the expression of the Rényi information divergence of the optimum distributions is precisely the opposite of the log-partition function as indicated in Property \ref{prop:val_opt_renyi}, the value of functionals (\ref{eq:Dc}) and (\ref{eq:Dg}) at their optima $\gamma^*$ such that $\bar{x}=m$ is precisely the value of entropy functionals $\mathcal{F}_\alpha^{(1)}(m)$ and $\mathcal{F}_\alpha^{(\alpha)}(m)$.
\end{proof}

\begin{remark}
 When $\alpha$ tends to 1, the parameter $\tilde\gamma^*$ is thus the maximizer of (\ref{eq:Zlim}), and we obtain
\begin{equation}
\lim_{\alpha\rightarrow1}\mathcal{F}_\alpha^{(.)} = \sup_{\tilde\gamma}\left\{ \tilde\gamma \bar{x} - \log\int_\mathcal{D} e^{\tilde\gamma x}Q(x) dx \right\},
\end{equation}
 that is the Cramér transform of $Q(x)$.
\end{remark}

With the help of these different results it is now possible to characterize more precisely the entropy functionals 
\begin{proposition}
\label{prop:Fgeneralproperties} Entropy functionals $\mathcal{F}_{\alpha}^{(C)}(m)$ and
$\mathcal{F}_{\alpha}^{(G)}(m)$ are nonnegative, with an unique minimum
at $m_{Q\text{ }}$, the mean of $Q,$ and $\mathcal{F}_{\alpha}^{(.)}%
(m_{Q})=0.$ Furthermore, $\mathcal{F}_{\alpha}^{(C)}(m)$ is strictly convex
for $\alpha\in\lbrack0,1].$
\end{proposition}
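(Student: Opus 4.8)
The plan is to handle the three assertions---nonnegativity together with the vanishing minimum, uniqueness of the minimizer, and strict convexity---separately, working mainly from the variational definition of $\mathcal{F}_\alpha^{(.)}$ as a constrained minimum of $D_\alpha(P||Q)$ and invoking the explicit partition-function formulas only where convenient. First I would establish the elementary fact that $D_\alpha(P||Q)\ge 0$ for every $P$, with equality if and only if $P=Q$. Writing $\int_\mathcal{D} P^\alpha Q^{1-\alpha}\,dx = E_Q\!\left[(P/Q)^\alpha\right]$ and applying Jensen's inequality to the convex map $t\mapsto t^\alpha$ when $\alpha>1$ (resp.\ to the concave map when $\alpha\in(0,1)$) gives $E_Q[(P/Q)^\alpha]\ge 1$ (resp.\ $\le 1$); combined with the sign of $\tfrac{1}{\alpha-1}$ this yields $D_\alpha\ge 0$, the equality case of Jensen forcing $P/Q$ to be constant, i.e.\ $P=Q$. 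Nonnegativity of both functionals is then immediate, each being a minimum of nonnegative quantities. For the value at $m_Q=E_Q[X]$ I would exhibit $P=Q$ (equivalently $\gamma^\ast=0$ in (\ref{eq:defPnu})) as a feasible point: it satisfies the classical constraint $E_Q[X]=m_Q$, and since $P=Q$ gives $P^\ast=Q$ in (\ref{P-star}) it also satisfies the generalized constraint $E_{P^\ast}[X]=m_Q$; as $D_\alpha(Q||Q)=0$ this proves $\mathcal{F}_\alpha^{(.)}(m_Q)=0$.

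Uniqueness of the minimum then follows from the same equality case: if $\mathcal{F}_\alpha^{(.)}(m)=0$ for some $m$, a minimizing $P$ must satisfy $D_\alpha(P||Q)=0$, hence $P=Q$, hence $m=m_Q$. Thus $m_Q$ is the only value at which the functionals vanish and, being their global lower bound, is their unique minimizer.

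For the convexity of $\mathcal{F}_\alpha^{(C)}$ with $\alpha\in[0,1]$, the key remark is that for $\alpha\in(0,1)$ the map $P\mapsto\int_\mathcal{D} P^\alpha Q^{1-\alpha}\,dx$ is concave (since $t\mapsto t^\alpha$ is concave), so $P\mapsto\tfrac{1}{\alpha-1}\log\int_\mathcal{D} P^\alpha Q^{1-\alpha}\,dx=D_\alpha(P||Q)$ is convex, the factor $\tfrac{1}{\alpha-1}<0$ turning the concave logarithm into a convex function. Because the classical constraint $E_P[X]=m$ together with $\int_\mathcal{D}P\,dx=1$ defines a set that is jointly convex (indeed affine) in the pair $(P,m)$, the functional $\mathcal{F}_\alpha^{(C)}(m)=\inf_P\{D_\alpha(P||Q):E_P[X]=m,\ \int_\mathcal{D}P\,dx=1\}$ is the infimal projection onto $m$ of a jointly convex function, hence convex in $m$. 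Strict convexity I would obtain more explicitly by differentiation: from Property~\ref{prop:val_opt_renyi} one has $\mathcal{F}_\alpha^{(C)}(m)=-\log Z_{\xi+1}(\gamma^\ast(m),m)$, and combining the envelope/optimality condition of Property~\ref{prop:maxlogZ} (which gives $Z_{\xi+1}=Z_{\xi}$ at $\gamma^\ast$ via (\ref{eqsucc})) yields $\tfrac{d}{dm}\mathcal{F}_\alpha^{(C)}(m)=(\xi+1)\gamma^\ast(m)$, so $\tfrac{d^2}{dm^2}\mathcal{F}_\alpha^{(C)}=(\xi+1)\,\tfrac{d\gamma^\ast}{dm}$, which I would show is strictly positive. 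The endpoints $\alpha=1$ (where $\mathcal{F}$ is the Cram\'er transform, always convex) and $\alpha=0$ are recovered by continuity. The same route does not apply to $\mathcal{F}_\alpha^{(G)}$, since the generalized constraint $E_{P^\ast}[X]=m$ involves $P^\alpha$ and is not affine in $P$; this is precisely why strict convexity is claimed only for the classical case.

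The main obstacle I expect is the strict, rather than plain, convexity. The value-function structure hands over convexity for free, but upgrading to strictness requires controlling the sign of $d\gamma^\ast/dm$, and hence how the minimizer $P_C$ moves with $m$. Because $P_\nu$ in (\ref{eq:defPnu}) is self-referential---its location parameter equals its own mean---one must differentiate the implicit optimality relation $m=E_{\xi}[X]$. Doing so reintroduces the variance-type factor $E_{\nu-2}\left[(X-m)^2\right]>0$ already met in the proof of Proposition~\ref{prop:dualf}; together with the signs $\xi<0$ and $\xi+1<0$ valid for $\alpha\in(0,1)$, this should give $(\xi+1)\,\tfrac{d\gamma^\ast}{dm}>0$ and close the argument.
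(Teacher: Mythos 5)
Your proposal is correct and shares the paper's skeleton for two of the three claims: nonnegativity and $\mathcal{F}_\alpha^{(.)}(m_Q)=0$ come from $D_\alpha(P||Q)\ge 0$ plus feasibility of $P=Q$ (which makes $P^{\ast}=Q$ in (\ref{P-star})), and convexity of $\mathcal{F}_\alpha^{(C)}$ comes from convexity of $P\mapsto D_\alpha(P||Q)$ for $\alpha\in[0,1]$ together with the affineness of the classical constraint --- your infimal-projection phrasing is exactly the paper's explicit argument that $\mu P_1+(1-\mu)P_2$ is feasible for the mean $\mu x_1+(1-\mu)x_2$ and dominates the optimal $\hat P$. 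You genuinely diverge in two places. For uniqueness, the paper differentiates: by (\ref{eq:dlogdx}), $\frac{d}{d\overline{x}}\log Z_{\nu+1}(\gamma,\overline{x})=-\gamma(\nu+1)$ vanishes only at $\gamma=0$, forcing $P=Q$; you instead invoke the equality case of Jensen ($D_\alpha(P||Q)=0$ iff $P=Q$), which is more elementary and sidesteps the differentiability conditions behind Property \ref{prop:dlog}, though, like the paper, it tacitly assumes the minimizer is attained (Theorem \ref{theo_solgen} supplies this). For strictness, your route is actually an improvement in intent: the paper's mixture argument delivers only non-strict convexity and the paper stops there, whereas your envelope identity $\frac{d}{dm}\mathcal{F}_\alpha^{(C)}(m)=(\xi+1)\gamma^{\ast}(m)$, hence $\frac{d^2}{dm^2}\mathcal{F}_\alpha^{(C)}(m)=(\xi+1)\frac{d\gamma^{\ast}}{dm}$, targets the strict inequality that the statement actually asserts. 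Be aware, however, that you leave the decisive sign $(\xi+1)\frac{d\gamma^{\ast}}{dm}>0$ as a sketch (implicit differentiation of $m=E_{\xi}[X]$ against the variance-type factor of Proposition \ref{prop:dualf}), so on this point your proof, like the paper's, is not fully closed. Your closing remark on why the argument fails for $\mathcal{F}_\alpha^{(G)}$ (the generalized constraint is not affine in $P$) is consistent with the paper, which handles the $(G)$ case only through the $\alpha\leftrightarrow 1/\alpha$ duality of Property \ref{prop:duality}.
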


\begin{proof}
Rényi information divergence $D_{\alpha}(P||Q)$ is always nonnegative, and
equal to zero for $P=Q$. Since functionals $\mathcal{F}_{\alpha}^{(.)}(x)$ are
defined as the minimum of $D_{\alpha}(P||Q)$, they are always nonnegative.
If $P=Q,$ we have also $P^{\ast}=Q$ and $m=E_{P}[X]$ =$E_{P^{\ast}%
}[X]=m_{Q}.$ Therefore $\mathcal{F}_{\alpha}^{(.)}(m_{Q})=0$ and $m_{Q}$ is a
global minimum. \newline
From (\ref{eq:dlogdx}), we have $\frac{d}{d\overline{x}}\log Z_{\nu+1}%
(\gamma,\overline{x})=-\gamma{\left(  \nu+1\right) }$. Then, functionals $\mathcal{F}_{\alpha}^{(.)}(x)$
are only minimum if $\gamma=0$, and the corresponding optimum probability distributions are simply $P=Q,$ and $D_{\alpha
}(Q||Q)=0.$ Therefore, $\mathcal{F}_{\alpha}^{(.)}(x)$  have an unique minimum for $x=m_{Q},$ the mean of
$Q$, and $\mathcal{F}_{\alpha}^{(.)}(m_{Q})=0.$ \newline
Finally, we examine the convexity of $\mathcal{F}_{\alpha}^{(C)}(m),$ for $\alpha\in\lbrack0,1]$.\par
Let $P_1$ and $P_2$ be the distributions that achieve the
minimization of $D_{\alpha}(P||Q)$ subject to the constraints $x_{1}=$
$E_{P}[X]$ and $x_{2}=E_{P}[X]$ respectively. Then, $\mathcal{F}%
_{\alpha}^{(C)}(x_{1})=D_{\alpha}(P_1||Q),$ and $\mathcal{F}_{\alpha
}^{(C)}(x_{2})=D_{\alpha}(P_2||Q)$. In the same way, denote
$\mathcal{F}_{\alpha}^{(C)}(\mu x_{1}+(1-\mu)x_{2})=D_{\alpha}(\hat{P}%
||Q),$ where $\hat{P}$ denotes the optimum distribution with mean $\mu
x_{1}+(1-\mu)x_{2}.$ Distributions $\hat{P}(u)$ and $\mu P_1(u)+(1-\mu)P_2(u)$ have the same mean $\mu x_{1}+(1-\mu)x_{2}$.
Hence, when $D_{\alpha}(P||Q)$ is a convex function of $P,$ that is for
$\alpha\in\lbrack0,1],$ \ we have $D_{\alpha}(P^{\ast}||Q)\leq\mu
D_{\alpha}(P_1(u)||Q)+(1-\mu)D_{\alpha}(P_2(u)||Q),$ that is
$\mathcal{F}_{\alpha}^{(C)}(\mu x_{1}+(1-\mu)x_{2})\leq\mu\mathcal{F}_{\alpha
}^{(C)}(x_{1})+(1-\mu)\mathcal{F}_{\alpha}^{(C)}(x_{2})$ and $\mathcal{F}%
_{\alpha}^{(C)}(x)$ is a convex function.
\end{proof}

Up to now the two optimization problems have been considered in parallel.
But here is a special symmetry that enables to relate the solutions
of the minimization of Rényi divergence subject to classical and generalized mean constraints. Then, there exists a simple relationship between the entropy functionals
$\mathcal{F}_{\alpha}^{(C)}(x)$ and $\mathcal{F}_{\alpha}^{(G)}(x)$.

Let us consider our original Rényi divergence minimization problem, on one hand with index $\alpha_1$ and subject to a classical mean constraint $m$, and on the other hand with index $\alpha_2$ and subject to a generalized mean constraint $m$.
The associated functionals, by Property \ref{prop:dualf}, are $D_C(\gamma)=-\log Z_{\xi_{1}%
+1}(\gamma,m)$ and $D_G(\gamma)=-\log Z_{-\xi_{2}}(\gamma,m).$ Thus, we will have
pointwise equality of these functions if $\xi_{1}+1=-\xi_{2},$ that is if
indexes $\alpha_{1}$ and $\alpha_{2}$ satisfy $\alpha_{1}=1/\alpha_{2}.$ In
this case, we will of course have equality of the optimum parameters $\gamma,$ and the
two optimization problems will have the same optimum value.
Because of the pointwise equality functions $D_G(\gamma)$ and $D_G(\gamma)$, it is clear that the associated divergences are equal at the optimum, that is $D_{\alpha_{1}}%
(P_{C}||Q)=D_{\alpha_{2}}(P_{G}||Q).$ Besides this is easily checked in the general case: for the escort
distribution $P^{\ast}(x)$ in (\ref{P-star}),  we \emph{always} have the equality $D_{\frac{1}{\alpha}}(P^{\ast
}||Q)=D_{\alpha}(P_{1}||Q)$. Hence, the minimization of the
$\alpha$ Rényi divergence subject to the generalized mean constraint is
exactly equivalent to the minimization of the $1/\alpha$ Rényi divergence
subject to the classical mean constraint
\begin{equation}
\left\{
\begin{array}
[c]{c}%
\inf_{P_{1}}D_{\alpha}(P_{1}||Q)\\
s.t\text{ \ }E_{P^{\ast}}\left[  X\right]  =m
\end{array}
\right.  =\left\{
\begin{array}
[c]{c}%
\inf_{P^{\ast}}D_{\frac{1}{\alpha}}(P^{\ast}||Q)\\
s.t\text{ \ }E_{P^{\ast}}\left[  X\right]  =m
\end{array}
\right.  , \label{eq:equality_pbs_1_over_alpha}%
\end{equation}
so that generalized and classical mean constraints can always be swapped,
provided the index $\alpha$ is changed into $1/\alpha$, as was argued in
\cite{Raggio1999a,Naudts2002}. Hence, equality (\ref{eq:equality_pbs_1_over_alpha}) enables us to complete the characterization of entropy functionals $\mathcal{F}_{\alpha}^{(C)}(m)$ and $\mathcal{F}_{\alpha}^{(G)}(m)$:
\begin{property}
\label{prop:duality}
Entropy functionals $\mathcal{F}_{\alpha}^{(C)}(m)$ and
$\mathcal{F}_{\alpha}^{(G)}(m)$
admit
the symmetry $\mathcal{F}_{\alpha}^{(G)}(x)=\mathcal{F}%
_{1/\alpha}^{(C)}(x).$ Besides, $\mathcal{F}_{\alpha}^{(C)}(m)$ is strictly
convex for $\alpha\in\lbrack0,1]$ and $\mathcal{F}_{\alpha}^{(G)}(m)$ is
strictly convex for $\alpha\in\lbrack1,+\infty].$
\end{property}

Interestingly, it is also possible to define a \emph{divergence in the object space}, that is a kind of generalized distance between two ``objects''. These divergences may be used for instance in clustering \cite{Nock2006}. The objects are here considered as generalized means of distributions with minimum divergence to a reference measure $Q(x)$.

\begin{proposition}
If $P_1$ and $P_2$ are two distributions in (\ref{eq:defPnu}) with exponent $\nu=-\xi$ (generalized constraint),  with $P_2\ll P_1$, and with respective parameters $\gamma_1$, $\gamma_2$ and means $m_1$, $m_2$, then
\begin{align}
& \mathcal{F}_{\alpha}^{(G)}(m_2,m_1) = D_{\alpha}(P_{2}||P_1)
 = \mathcal{F}_{\alpha}^{(G)}(m_2)  - \mathcal{F}_{\alpha}^{(G)}(m_1) \nonumber \\& + \frac{1}{\alpha-1}\log\left(
1-(\alpha-1)\frac{d \mathcal{F}_{\alpha}^{(G)}}{dm}(m_1)(m_2-m_1)
\right),
\label{eq:bregman-like}
\end{align}
and $\mathcal{F}_{\alpha}^{(G)}(m_2,m_1) \ge 0$, with equality if and only if $m_2=m_1$.
\end{proposition}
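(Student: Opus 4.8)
The plan is to compute $D_\alpha(P_2\|P_1)$ directly from the definition (\ref{eq:defRenyi}) and then recognize each piece in terms of the functional $\mathcal{F}_\alpha^{(G)}$. First I would write both distributions in the form (\ref{eq:defPnu}) with the generalized exponent $\nu=-\xi=\tfrac{1}{1-\alpha}$, so that $P_i(x)=[1+\gamma_i(x-m_i)]^{-\xi}Q(x)/Z_{-\xi}(\gamma_i,m_i)$ for $i=1,2$, and insert $P_2^{\alpha}P_1^{1-\alpha}$ into the integral. The crucial simplification is that, with $\xi=\tfrac{1}{\alpha-1}$, the two exponents collapse to values that are integers relative to the family: $-\xi\alpha=-(\xi+1)$ and $-\xi(1-\alpha)=1$. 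Hence the integrand reduces to $[1+\gamma_2(x-m_2)]^{-(\xi+1)}[1+\gamma_1(x-m_1)]\,Q(x)$ divided by $Z_{-\xi}(\gamma_2,m_2)^{\alpha}Z_{-\xi}(\gamma_1,m_1)^{1-\alpha}$.

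Next I would evaluate the remaining integral $I=\int_\mathcal{D}[1+\gamma_2(x-m_2)]^{-(\xi+1)}[1+\gamma_1(x-m_1)]Q(x)\,dx$ by recognizing that $[1+\gamma_2(x-m_2)]^{-(\xi+1)}Q(x)=Z_{-(\xi+1)}(\gamma_2,m_2)\,P_{-(\xi+1)}(x)$, the density of exponent $-(\xi+1)$ built on $(\gamma_2,m_2)$. Taking the expectation of the affine factor $1+\gamma_1(x-m_1)$ under this density and using the defining property of $P_2$ as a generalized-constraint optimum, namely $m_2=E_{-(\xi+1)}[X]$ (Theorem \ref{theo_solgen}(G)), gives $I=Z_{-(\xi+1)}(\gamma_2,m_2)\bigl(1+\gamma_1(m_2-m_1)\bigr)$. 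Because this same condition $m_2=E_{-(\xi+1)}[X]$ is exactly the equality case of Property \ref{prop:Zsucc} for $\nu=-(\xi+1)$, one has $Z_{-(\xi+1)}(\gamma_2,m_2)=Z_{-\xi}(\gamma_2,m_2)$. Collecting the factors and using $\tfrac{1-\alpha}{\alpha-1}=-1$, the divergence becomes $-\log Z_{-\xi}(\gamma_2,m_2)+\log Z_{-\xi}(\gamma_1,m_1)+\tfrac{1}{\alpha-1}\log(1+\gamma_1(m_2-m_1))$; by Property \ref{prop:val_opt_renyi}(G) the first two terms are precisely $\mathcal{F}_\alpha^{(G)}(m_2)-\mathcal{F}_\alpha^{(G)}(m_1)$.

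It then remains to identify the scale parameter $\gamma_1$ with the derivative of $\mathcal{F}_\alpha^{(G)}$ at $m_1$. Writing $\mathcal{F}_\alpha^{(G)}(m)=-\log Z_{-\xi}(\gamma^*(m),m)$ and differentiating totally in $m$, the term in $d\gamma^*/dm$ drops out because $\gamma^*$ is a critical point of $\log Z_{-\xi}(\cdot,m)$ (Property \ref{prop:maxlogZ}), so only the partial derivative in $\bar x$ survives. Evaluating it with (\ref{eq:dlogdx}) for $\nu=-\xi$, whose hypothesis $\bar x=E_{\nu-1}[X]=E_{-(\xi+1)}[X]$ holds at the optimum, yields $\tfrac{d}{dm}\mathcal{F}_\alpha^{(G)}(m)=-\xi\gamma^*=-\tfrac{\gamma^*}{\alpha-1}$, hence $\gamma_1=-(\alpha-1)\tfrac{d\mathcal{F}_\alpha^{(G)}}{dm}(m_1)$. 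Substituting this into $1+\gamma_1(m_2-m_1)$ gives exactly (\ref{eq:bregman-like}). Finally, $\mathcal{F}_\alpha^{(G)}(m_2,m_1)=D_\alpha(P_2\|P_1)\ge0$ because R\'enyi divergence is nonnegative (this is also where $P_2\ll P_1$ is needed so the divergence is finite), and it vanishes iff $P_2=P_1$; since the generalized optimum is determined by its mean ($\gamma^*$ being a function of $m$), this is equivalent to $m_2=m_1$. The main obstacle is the envelope-type identification in this last paragraph: it requires that the hypotheses of (\ref{eq:dlogdx}) and of Property \ref{prop:maxlogZ} be simultaneously in force at $\gamma^*$, i.e.\ that $P_1$ genuinely solves the generalized problem at $m_1$, which is what licenses replacing the total derivative by the partial one and then by $-\xi\gamma^*$.
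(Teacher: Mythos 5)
Your proposal is correct and follows essentially the same route as the paper's proof: the same expansion of $D_\alpha(P_2\|P_1)$ using the exponent identities $-\xi\alpha=-(\xi+1)$, $-\xi(1-\alpha)=1$, the same use of the generalized-mean condition to get $Z_{-(\xi+1)}(\gamma_2,m_2)=Z_{-\xi}(\gamma_2,m_2)$ and to reduce the integral to $m_2-m_1$, and the same identification $\gamma_1=-(\alpha-1)\frac{d\mathcal{F}_\alpha^{(G)}}{dm}(m_1)$ via (\ref{eq:dlogdx}). Your only addition is making the envelope-type argument (killing the $d\gamma^*/dm$ term through Property \ref{prop:maxlogZ}) explicit, a step the paper leaves implicit in its citation of Property \ref{prop:dlog}.
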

\begin{proof}
The result is obtained by simple computations. First, we have
\[
D_\alpha(P_2||P_1) = \frac{1}{\alpha-1} \log{\int\frac{[1+\gamma_2(x-m_2)]^{\frac{\alpha}{1-\alpha}} }{Z_{-\xi}(\gamma_2,m_2)^\alpha}\frac{[1+\gamma_1(x-m_1)]}{Z_{-\xi}(\gamma_1,m_1)^{1-\alpha}} Q(x) dx}
\]
which can be rewritten as
\begin{align}
D_\alpha(P_2||P_1) = & \frac{1}{1-\alpha}\left( \alpha\log Z_{-\xi}(\gamma_2,m_2) + (1-\alpha)\log Z_{-\xi}(\gamma_1,m_1)
- \log Z_{-\xi-1}(\gamma_2,m_2) \right) \\
+ &  \frac{1}{\alpha-1} \log \left[  1 + \gamma_1 \int (x-m_1) \frac{[1+\gamma_2(x-m_2)]^{-\xi-1}}{Z_{-\xi-1}(\gamma_2,m_2)} Q(x) dx \right]
\end{align}
In the first line, we have $Z_{-(\xi+1)}(\gamma_2,m_2)=Z_{-\xi}(\gamma_2,m_2)$ by Property \ref{prop:Zsucc}, eq. (\ref{eqsucc}), and we recognize from Proposition \ref{prop:dualf} that $\mathcal{F}_{\alpha}^{(G)}(m) = -\log Z_{-\xi}(\gamma,m)$. In the second line, the integral reduces to $(m_2-m_1)$ since $m_2$ is the generalized mean of the distribution $P_2$. Finally, $\gamma_1$ can be expressed as the derivative of the log-partition function as stated by (\ref{eq:dlogdx}) in Property \ref{prop:dlog}.

By definition, $\mathcal{F}_{\alpha}^{(G)}(m_2,m_1)$ is the Rényi information divergence  $D_{\alpha}(P_{2}||P_1)$ which is always greater or equal to zero, with equality if and only if $P_2=P_1$, which implies $m_2=m_1$.
\end{proof}

For $\alpha\rightarrow 1$, $\mathcal{F}_{\alpha}^{(G)}(m_2,m_1)$ reduces to a standard \textit{Bregman divergence}. Indeed, using $\log(1-x)\simeq -x$, we have simply
\[
\lim_{\alpha\rightarrow1} \mathcal{F}_{\alpha}^{(G)}(m_2,m_1)
 = \mathcal{F}_{\alpha}^{(G)}(m_2)  - \mathcal{F}_{\alpha}^{(G)}(m_1)
- \frac{d \mathcal{F}_{\alpha}^{(G)}}{dm}(m_1)(m_2-m_1).
\]

\section{Examples of entropy functionals}
\label{sec:special_cases}

We now examine 4 special cases for the reference mesure $Q(x)$:
a uniform and an exponential distribution that model systems with continuous states; and then a Bernoulli (two-levels) and a Poisson distribution which may model systems with discrete states.  
The minima of the Rényi divergence, that is the  entropies $\mathcal{F}_{\alpha}^{(C\text{ or }G)}(x)$, are attained for the values $\gamma^{\ast}$ that maximize the functionals $D_C(\gamma)$ and $D_G(\gamma)$ in Proposition \ref{prop:dualf}. This involves the computation of $Z_{\nu}(\gamma,m)$ for all reference measures $Q$ considered, and the resolution of $\frac{d}{d\gamma}Z_{\nu+1}(\gamma,m)=0$.
The case $\alpha=1$  is obtained in the limit  $|\nu|\rightarrow +\infty$, since $|\xi|\rightarrow +\infty$ when $\alpha$ tends to 1. Results of numerical evaluations for varying $\alpha$ are provided.

\subsection{Uniform reference}

Let us first consider the case of the uniform reference $Q(x)$ on $[0,1$. The partition function is given by $Z_{\nu}%
(\gamma,m)=\int_{\mathcal{D}}\left[  \gamma(x-m)+1\right]  ^{\nu}dx,$ where
the domain $\mathcal{D}$ is defined by $\mathcal{D=D}_{Q}\cap\mathcal{D}%
_{\gamma},$ with $\mathcal{D}_{Q}=\left\{
x:x\in\lbrack0,1]\right\}  $ and $\mathcal{D}_{\gamma}=\left\{  x:\gamma
(x-m)+1\geq0\right\}$. 

Computation of the partition function in the different domains together with the fact that $m\in\lbrack0,1]$ leads to
\begin{align*}
Z_{\nu}(\gamma,m)  &  =\allowbreak\frac{1}{\gamma\left(  1+\nu\right)
}\left(  \left(  \gamma-\gamma m+1\right)  ^{\nu+1}U(\gamma-\frac{1}%
{m-1})-\left(  -\gamma m+1\right)  ^{\nu+1}U(-\gamma+\frac{1}{m})\right)  ,\\
\text{for all }\gamma\text{ if }\nu &  \geq0,\text{ for }\gamma\in\left(
\frac{1}{m-1},\frac{1}{m}\right)  \ \ \text{if }\ \nu<0,\text{ and }Z_{\nu
}(\gamma,m)=\allowbreak+\infty\text{ otherwise},
\end{align*}
where $U$ denotes the Heaviside distribution: $U(t)=0$ for $t<0$ and $U(t)=1$
for $t>0$.

The first derivative of the partition function is given by%
\begin{equation}
\frac{d}{d\gamma}Z_{\nu}(\gamma,m)=-\frac{\nu\gamma\left(  m-1\right)
+1}{\gamma^{2}\left(  \nu+1\right)  }\left(  \gamma\left(  m-1\right)
+1\right)  ^{\nu}U(\gamma-\frac{1}{m-1})+\frac{\gamma m(\nu)+1}{\gamma
^{2}\left(  \nu+1\right)  }\left(  1-\gamma m\right)  ^{\nu}U(-\gamma+\frac
{1}{m}).
\end{equation}
%
We next have to look for the expression of entropy functionals $\mathcal{F}_{\alpha}^{(.)}(x)$.
Unfortunately, no analytical solution can be exhibited here, but the two functionals still can be evaluated numerically. For the classical mean constraint (C) we can check that $\mathcal{F}_{\alpha}^{(C)}(x)$ is a family of convex functions on $(0,1)$, minimum for the mean of the reference measure $Q$, as was indicated in Proposition \ref{prop:Fgeneralproperties}. 
In the same way, we can check that for the generalized mean constraint (G)
$\mathcal{F}_{\alpha}^{(G)}(x)$ is a family of nonnegative functions on $(0,1)$, also
minimum for the mean of the reference measure $Q$. The entropies $\mathcal{F}_{\alpha}^{(C)}(x)$ and $\mathcal{F}_{\alpha}^{(G)}(x)$ were evaluated numerically and are given in Figs. \ref{Fig_Uni_classical} and \ref{Fig_Uni_generalized} for $\alpha\in(0,1)$. Of course, the $\alpha\leftrightarrow1/\alpha$ duality given in Property \ref{prop:duality} enables to extend these two functionals for $\alpha>1$.

%
%

\begin{figure}
 \begin{minipage}[b]{.46\linewidth}
  \centering\includegraphics[width=3in]{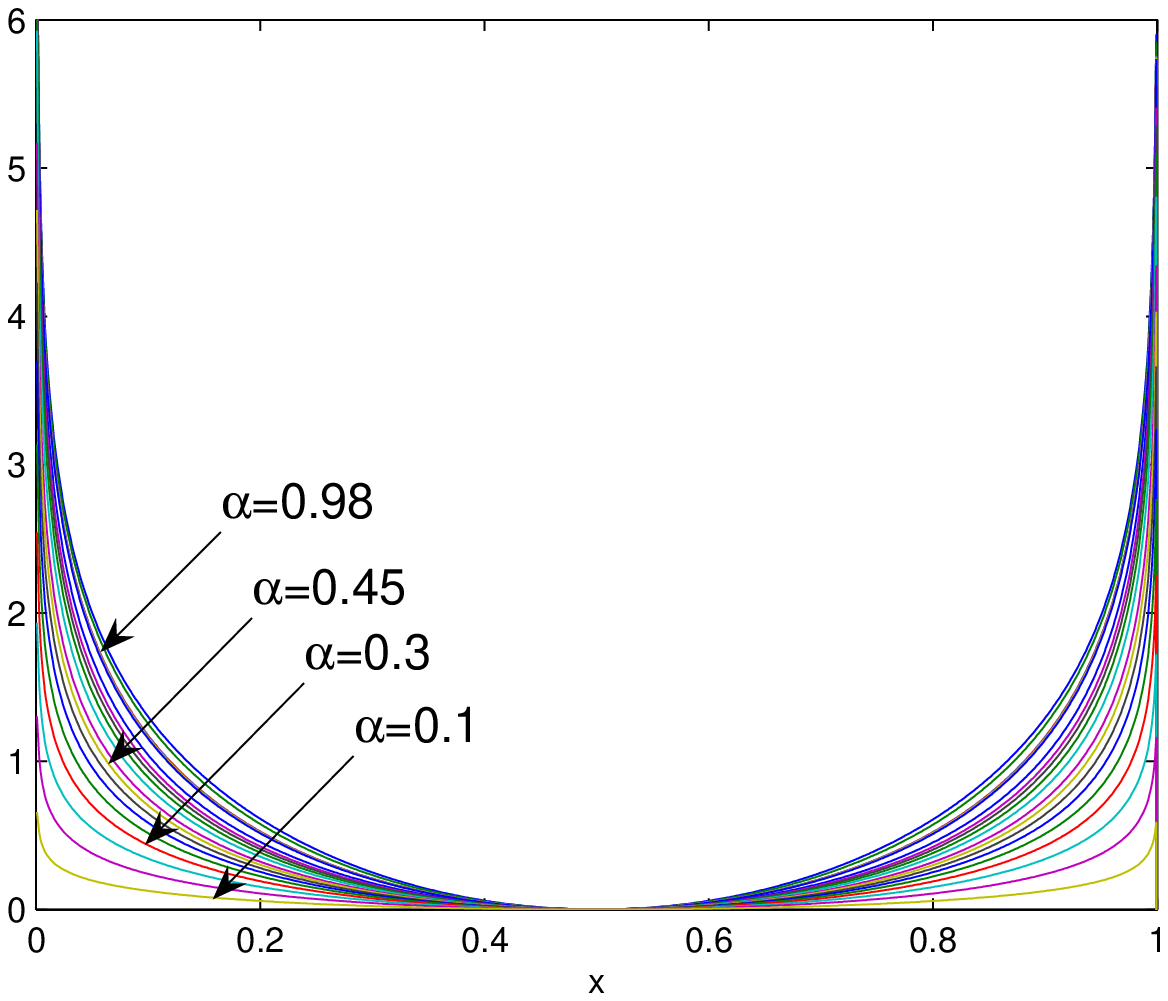}
\caption{Entropy functional $\mathcal{F}_{\alpha}^{(C)}(x)$ \ for a uniform
reference measure and $\alpha\in(0,1)$.}%
\label{Fig_Uni_classical}%
 \end{minipage} \hfill
 \begin{minipage}[b]{.46\linewidth}
  \centering \includegraphics[width=3in]{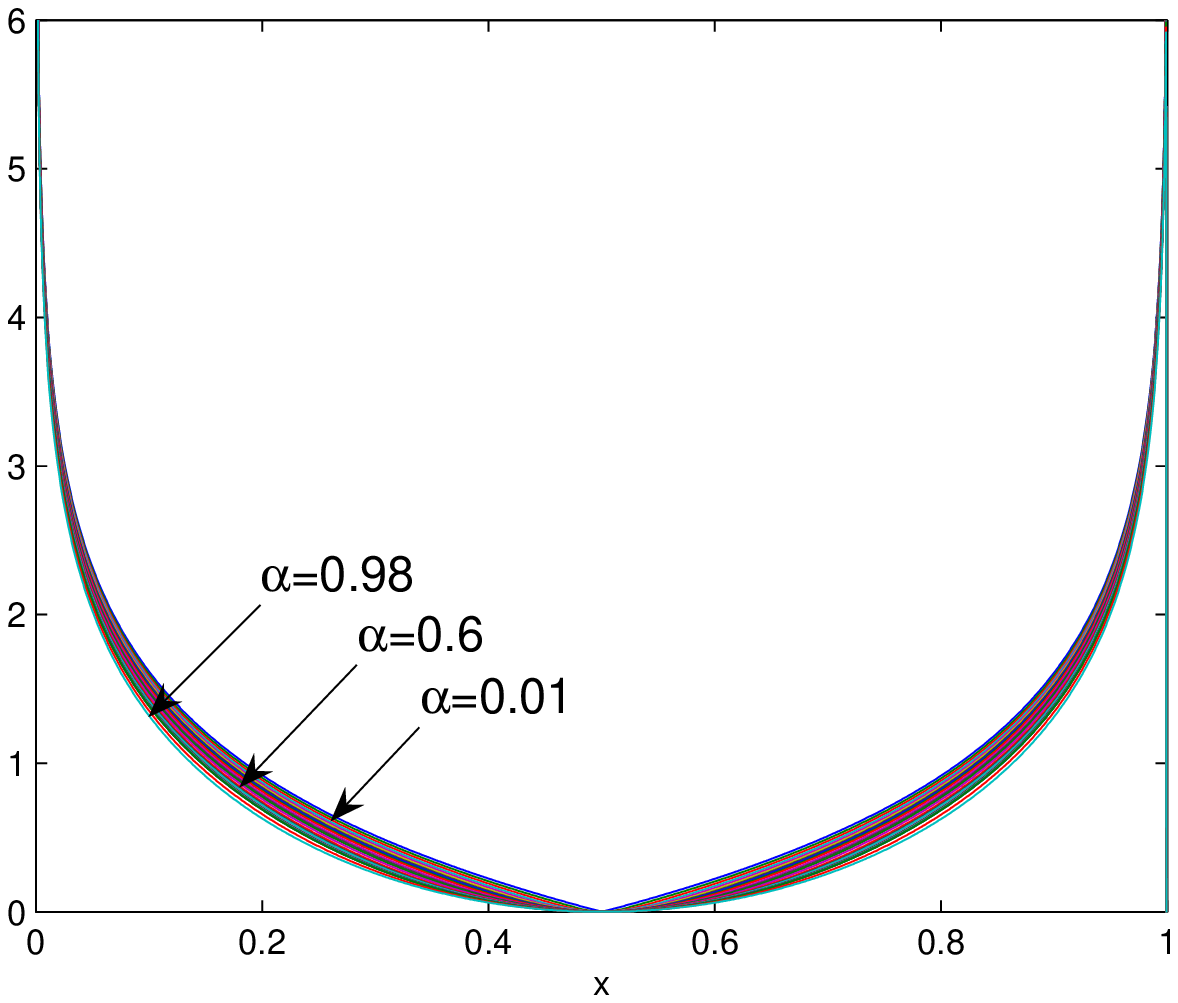}
\caption{Entropy functional $\mathcal{F}_{\alpha}^{(G)}(x)$ \ for a
uniform reference measure and $\alpha\in(0,1)$.}%
\label{Fig_Uni_generalized}%
 \end{minipage}
\end{figure}

Hence, it is apparent that the minimization of $\mathcal{F}_{\alpha}^{(.)}(x)$ 
under some constraint would automatically lead to a solution on (0,1). Moreover, the
parameter $\alpha$ may serve to tune the curvature of the functional and the
degree of penalization of bounds.

\subsection{Exponential reference}

The exponential probability density function is $Q(x)=\beta e^{-\beta x},$ for $x \geq 0$ and $\beta>0$. 
The partition function is given by%
\begin{equation}
Z_{\nu}(\gamma,m)=\beta\int_{\mathcal{D}}\left[  \gamma(x-m)+1\right]  ^{\nu
}e^{-\beta x}dx
\end{equation}
where $\mathcal{D}=\left\{
x:x\geq\max\left\{  0,m-\frac{1}{\gamma}\right\}  \text{ if }\gamma>0\text{ or
}x\in\lbrack0,m-\frac{1}{\gamma}]\text{ if }\gamma<0\text{ }\right\}$, ensures  that the integrand $\left[\gamma(x-m)+1 \right]$ is nonnegative and the integral finite.

The evaluation of $Z_{\nu}(\gamma,m)$ on the different domains gives: 
\begin{equation}
Z_{\nu}(\gamma,m)=\left\{
\begin{array}
[c]{ccc}%
e^{-\beta\frac{\gamma m-1}{\gamma}}\left(  \frac{\gamma}{\beta}\right)  ^{\nu
}\Gamma\left(  \nu+1\right)  & \text{if } &
\begin{array}
[c]{c}%
\gamma>\frac{1}{m}>0\\
\nu\geq0
\end{array}
\\
e^{-\beta\frac{\gamma m-1}{\gamma}}\left(  \frac{\gamma}{\beta}\right)  ^{\nu
}\Gamma(\nu+1,\beta\frac{1-\gamma m}{\gamma}) & \text{if } & \frac{1}%
{m}>\gamma>0\\
e^{-\beta\frac{\gamma m-1}{\gamma}}\left(  \frac{\beta}{\gamma}\right)
^{-\nu}\left(  \Gamma\left(  \nu+1,\beta\frac{-\gamma m+1}{\gamma}\right)
-\Gamma\left(  \nu+1\right)  \right)  & \text{if} &
\begin{array}
[c]{c}%
\gamma<0<\frac{1}{m}\\
\nu\geq0
\end{array}
\end{array}
\right.
\end{equation}
and $Z_{\nu}(\gamma,m)=+\infty$ for $\gamma<0$ or $\gamma>\frac{1}{m}$ if
$\nu<0.$
%


Let us now examine the behavior of the entropies $\mathcal{F}_{\alpha}%
^{(.)}(x)$ when $\alpha\rightarrow1$. This amounts to study 
$Z_{\nu}(\gamma,m)$ and its maximum when $\left\vert \nu\right\vert \rightarrow+\infty$. 

The simplest derivation is as follows. 
As in Remark \ref{rem:limcase}, let $\gamma=\tilde{\gamma}/\nu$, so that $(1+\gamma(x-m))^\nu \sim \exp(\tilde{\gamma}(x-m))$. In this case, one easily obtain that 
\begin{equation}
\log Z_{\nu}(\tilde{\gamma},m)\simeq
\log\beta-\tilde{\gamma} m - \log(\beta-\tilde{\gamma}),
\end{equation}
whose derivative is equal to zero for
\begin{equation}
\tilde{\gamma}^{\ast}=\beta-\frac{1}{m}.%
\end{equation}
We shall  also note that if $\nu<0,$ the sign of $\gamma=\tilde{\gamma}/\nu$ is the sign of $\left(
1-\beta m\right).$ Since $Z_{\nu}(\gamma,m)$ is only defined for $\gamma>0$ when $\nu<0,$ it means that we only have a solution for $m<1/\beta$. Indeed, for $\gamma>0$ and $\nu<0$, the factor
$(1+\gamma(x-m))^\nu$ is decreasing, and consequently the mean of the optimum distribution (\ref{eq:defPnu}) cannot be greater than the mean of the reference distribution, $m_Q=E_Q[X]=1/\beta$. 

With the optimum value $\tilde{\gamma}^{\ast}$, the log partition function becomes%
\begin{equation}
\log Z_{\nu}(\gamma^{\ast},m)\simeq-\left(  \beta m-1\right)  +\log\left(
\beta m\right)  \ \ \ \ \ (\forall m\text{ if }\nu\rightarrow+\infty,\text{for
}m<1/\beta\text{ if }\nu\rightarrow-\infty).
\end{equation}
Finally, we thus obtain%
\begin{equation}
\mathcal{F}_{\alpha\rightarrow1}^{(C)}(x)  =  -\log Z_{\xi+1}(\gamma^{\ast},x)=\left(  \beta x-1\right)  -\log\left(  \beta x\right),
\label{eq:itakura}
\end{equation}
for $x<1/\beta$ when $\alpha$ tends to 1 by lower values, and for all $x$ if $\alpha$ tends to 1 by higher values. 
By the duality property \ref{prop:duality}, this expression is also the limit form of functional $\mathcal{F}_{\alpha}^{(G)}(x)$.
%

As was expected, the functional $\left(  \beta x-1\right)  -\log\left(  \beta
x\right)  $ is strictly convex, positive and zero for $x=1/\beta,$ the mean of
the exponential distribution. It was employed in speech processing and is
called the \emph{Itakura-SaÃ¯to entropy functional. }For $\beta=1,$ it reduces
to the so-called \emph{\ Burg entropy} that is well-known in spectrum analysis.

The entropy functionals can be evaluated numerically. For instance,  $\mathcal{F}_{\alpha}^{(G)}(x)$ is given on Fig.~\ref{Fig_Expo_classical} for $\alpha>0$.  It is a family of nonnegative functions, equal to zero for $x=m_{Q}=1/\beta$, and  convex for $\alpha\in[1,+\infty)$.

\begin{figure}[ptbh]
\begin{center}
\includegraphics[width=3.6in,height=2.4in]{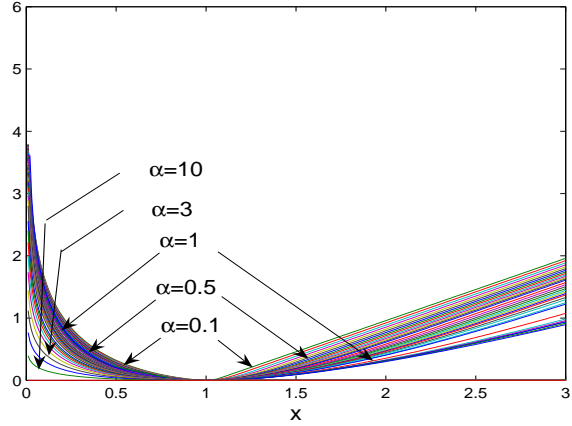}
\end{center}
\caption{Entropy functional $\mathcal{F}_{\alpha}^{(G)}(x)$ for an exponential
reference measure with $\beta=1$ and $\alpha>0$. By Property \ref{prop:duality} it is also $\mathcal{F}_{1/\alpha}^{(C)}(x)$.}%
\label{Fig_Expo_classical}%
\end{figure}
%
%
%
%

\subsection{Bernoulli reference}

Let us now consider the case of the Bernoulli measure $Q(x)=\beta
\delta(x)+(1-\beta)\delta(x-1)$. Of course, the (generalized) mean of optimum distributions is somewhere in the interval $[0,1]$. When $\gamma$ is outside of the interval $(\frac{1}{m-1},\frac{1}{m})$, the probability distribution reduces to a pure state --- $\delta(x)$ or $\delta(x-1)$, and its  (generalized) mean is $0$ or $1$. 
Incorporation of the bounds into the domain depends on the sign of $\nu:$ for $\nu<0,$ $Z_{\nu}(\gamma,m)$ diverges to $+\infty$ on the bounds whereas it remains finite for $\nu>0.$
The expression of the partition function follows directly from the definition:
\begin{equation}
 Z_{\nu}(\gamma,m) = \beta(1-\gamma m)^\nu + (1-\beta)(1+\gamma(1-m))^\nu.
\end{equation}

In contrast to the previous case, it is possible here to obtain an explicit expression of the entropy functionals for any $\alpha$. Indeed, if $p$ denotes the value of the optimum distribution at $x=1$, then the generalized expectation is
\begin{align}
m  = \frac{\sum_{x=0}^1 x P(x)^\alpha Q(x)^{1-\alpha}}
{\sum_{x=0}^1 P(x)^\alpha Q(x)^{1-\alpha}} 
 = \frac{(1-\beta)^{1-\alpha}p^\alpha}
{\beta^{1-\alpha}(1-p)^\alpha+(1-\beta)^{1-\alpha}p^\alpha}
\end{align}
and it is therefore possible to express $p$ as a function of $m$:
\begin{equation}
p  = \frac{\left(\beta^{1-\alpha}x\right)^{\frac{1}{\alpha}}}
{\left(\beta^{1-\alpha}x\right)^{\frac{1}{\alpha}} + \left((1-\beta)^{1-\alpha}(1-x)\right)^{\frac{1}{\alpha}}}.
\label{eq:p}
\end{equation}
Now, since the Rényi information divergence is
\begin{equation}
D_\alpha(P||Q) = \frac{1}{\alpha-1} \log
\left[
\beta^{1-\alpha}(1-p)^\alpha+(1-\beta)^{1-\alpha}p^\alpha
\right]
\label{eq:Dbernoulli}
\end{equation}
it suffices to replace $p$ by the expression (\ref{eq:p}) which leads to
\begin{equation}
\mathcal{F}_\alpha^{(G)}(m) =
\frac{\alpha}{1-\alpha} \log
\left[
\beta^{1-\frac{1}{\alpha}}(1-m)^\frac{1}{\alpha} +
(1-\beta)^{1-\frac{1}{\alpha}} m^{1}{\alpha}
\right]
\label{eq:FGbern}
\end{equation}
The case of the classical mean is even simpler: we have $m=p$, and $\mathcal{F}_\alpha^{(C)}(m)$ has the expression of the divergence in (\ref{eq:Dbernoulli}) with $p$ replaced by $m$. It is also interesting to note, and check, 
that the $\alpha\leftrightarrow1/\alpha$ duality of Property \ref{prop:duality} links these two expressions.

The limit case $\alpha\rightarrow1$ is easily derived using L'Hospital's rule. It comes
\begin{equation}
\mathcal{F}_{\alpha\rightarrow1}^{(.)}(x)=x\ln\left(  \frac{x}{1-\beta}\right)  +(1-x)\ln\left(  \frac{1-x}{\beta
}\right).
\end{equation}
This expression is the celebrated\emph{\ Fermi-Dirac entropy} that is strictly convex, nonnegative, and equal to zero for $x=E_{Q}[X]=1-\beta,$ the mean $m_{Q}$ of the reference measure.

Plots of the entropy functionals are given in Figs. \ref{Fig_Bern_classical} and \ref{Fig_Bern_generalized} for $\alpha\in(0,1)$ and $\beta=1/2$. In both cases, we have a family of nonnegative functions, equal to zero for the mean of the reference measure. It can also be checked that $\mathcal{F}_{\alpha}^{(C)}(x)$ is convex for $\alpha\in(0,1]$.

%

\begin{figure}
 \begin{minipage}[b]{.46\linewidth}
  \centering\includegraphics[width=3in]{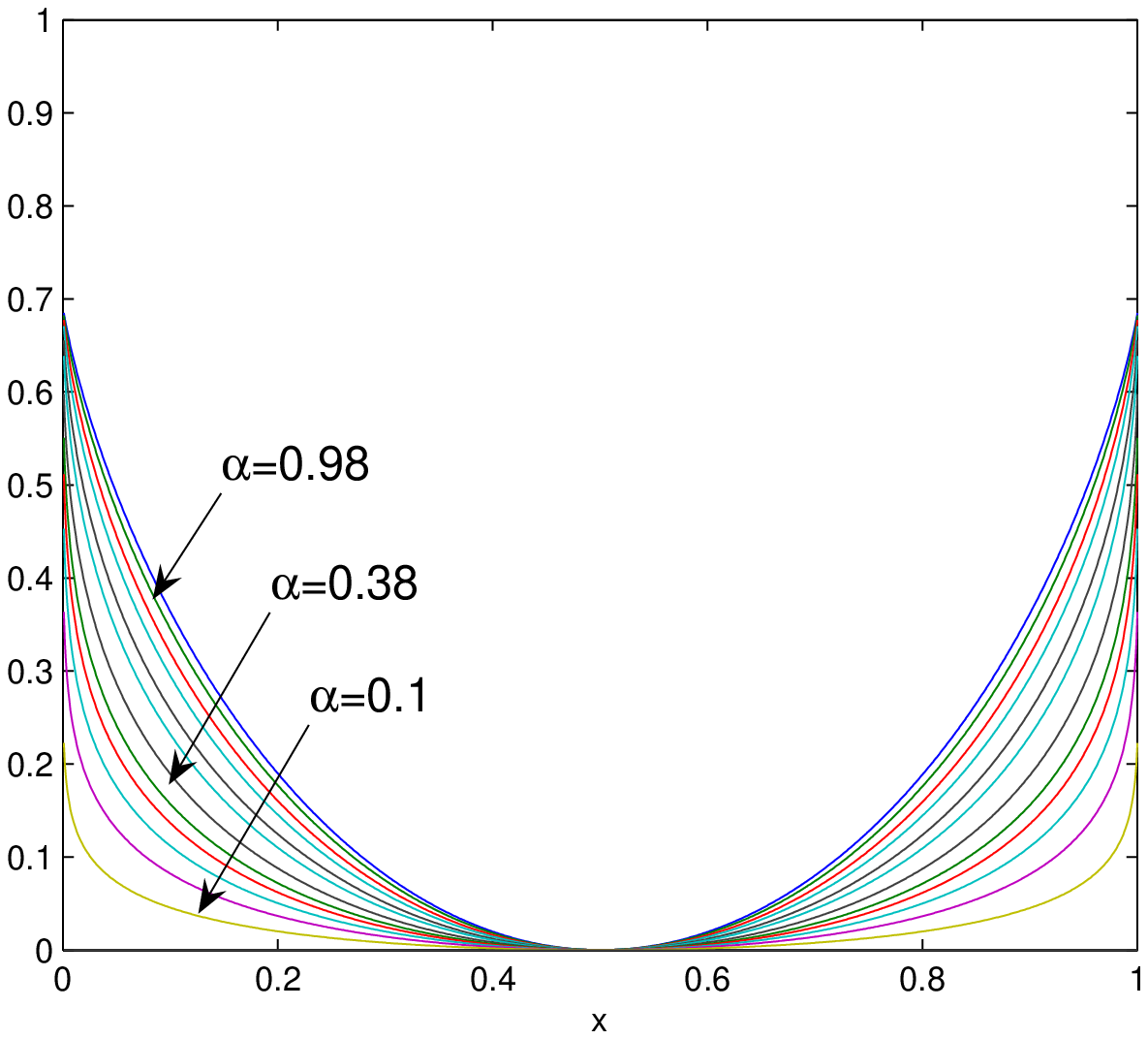}
\caption{Entropy functional $\mathcal{F}_{\alpha}^{(C)}(x)$ \ for a Bernoulli
reference measure and $\alpha\in(0,1)$.}%
\label{Fig_Bern_classical}%
 \end{minipage} \hfill
 \begin{minipage}[b]{.46\linewidth}
  \centering \includegraphics[width=3in]{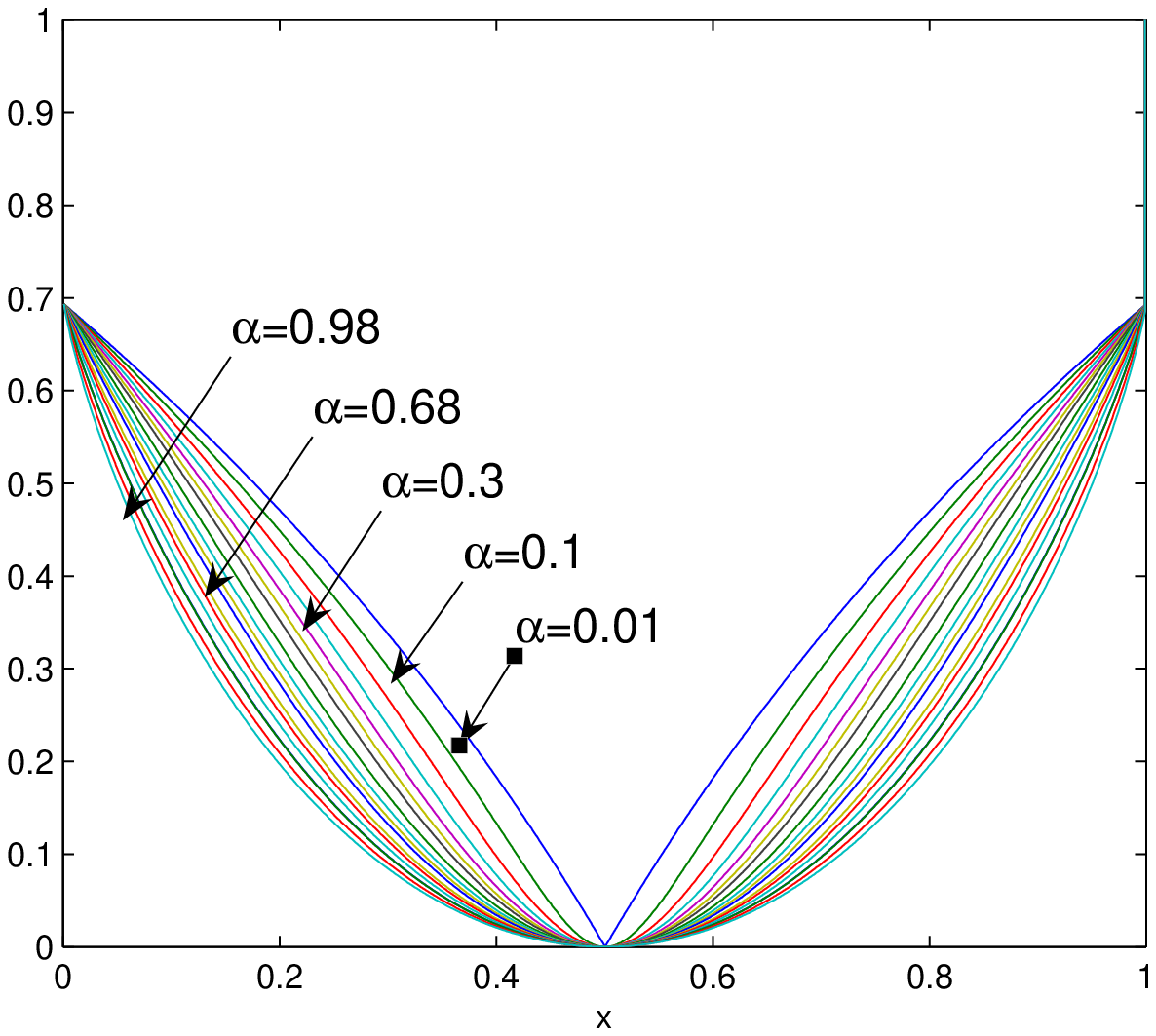}
\caption{Entropy functional $\mathcal{F}_{\alpha}^{(G)}(x)$ \ for a
Bernoulli reference measure and $\alpha\in(0,1)$.}%
\label{Fig_Bern_generalized}%
 \end{minipage}
\end{figure}

%
%
%

\subsection{Poisson reference}

As a final example, let us consider the case of a Poisson measure 
$Q(x)=\frac{\mu^{x}}{x!}e^{-\mu},$ for $x\geq0$. 
Domain $\mathcal{D}$ is $\mathcal{D=D}%
_{Q}\cap\mathcal{D}_{\gamma},$ where $\mathcal{D}_{Q}=%
\mathbb{N}^{+}$ and $\mathcal{D}_{\gamma}=\left\{  x:\gamma(x-m)+1\geq0\right\}$. The
partition function is given by
\begin{equation}
Z_{\nu}(\gamma,m)=\sum_{\mathcal{D}}\left[  \gamma(x-m)+1\right]^{\nu}%
\frac{\mu^{x}}{x!}e^{-\mu}. 
\label{eq:partition_function_poisson}%
\end{equation}
Three cases appear, according to the value of $\gamma$:
\begin{itemize}
\item[(a)] if $\frac{1}{m} \ge \gamma \ge 0$, then   
$\mathcal{D}$ reduces to $\mathcal{D}_{1}=\left\{  x:x\in\left[  0,+\infty\right) \right\}$; 
\item[(b)] for 
$\gamma \ge \frac{1}{m}$ the domain is $\mathcal{D}_{2}   =\left\{  x:x\in [  \left\lceil m-\frac{1}{\gamma
}\right\rceil ,+\infty ) \right\}$; 
\item[(c)] when $\gamma<0$, $\mathcal{D}=\mathcal{D}_{3} 
=\left\{  \text{ }x\in\lbrack 0,\left\lfloor m-\frac {1}{\gamma}\right\rfloor ]
\right\}$. 
\end{itemize}
In these expressions 
 $\lfloor x\rfloor$ denotes the floor function that
returns the largest integer less than or equal to x; and $\lceil x\rceil$
is the ceil function, the smallest integer not less than $x$.

Closed-form formulas can not be derived in the general case, but only in the
case of an integer exponent $\nu.$ 
When $\nu$ is not an integer, we will have to resort to the serie
(\ref{eq:partition_function_poisson}), possibly truncated for numerical
computations. In order to save space, we only sketch the derivation in
$\mathcal{D}_{1}$: 
%
\begin{equation}
Z_{\nu}(\gamma,m)=\left(  1-\gamma m\right)  ^{\nu}e^{-\mu}\sum_{x=0}%
^{+\infty}\left(  \theta x+1\right)  ^{\nu}\frac{\mu^{x}}{x!}%
\label{eq:partition_function_poisson_D1}%
\end{equation}
with $\theta=\frac{\gamma}{1-\gamma m}$. 
In the serie above 
the ratio of successive terms $\frac{\left(  1+\theta
x+\theta)\right)  ^{\nu}}{(x+1)\left(  1+\theta x\right)  ^{\nu}}\mu$ is the
ratio of two completely factored polynomials. This indicates that the serie
can be written as a generalized hypergeometric function, when $\nu$ is
integer. So doing, we obtain%
\begin{align*}
Z_{\nu}(\gamma,m) &  =\left(  1-\gamma m\right)  ^{\nu}e^{-\mu
}\ _{|\nu|}F_{|\nu|}(a,...,a_{;}b,...,b_{;}\mu)
\end{align*}
with $a={(1+\theta)}/{\theta}$ and $b=1/\theta$ for $\nu>0$; or with
$a=1/\theta$ and $b={(1+\theta)}/{\theta}$ for $\nu<0$.

The derivative with respect to $\gamma$ is%
\begin{equation}
\frac{d}{d\gamma}Z_{\nu+1}( \gamma  ,m) = \left(  1-\gamma m\right)
^{\nu}e^{-\mu}\left(  \nu+1\right)  
 \sum_{x=0}^{+\infty}(x-m)\left(  1+\theta
x\right)  ^{\nu}\frac{\mu^{x}}{x!},
\label{eq:derivative_partition_function_poisson_D1}%
\end{equation}
that can also be expressed using hypergeometric functions.
%
%
Formulas for domains $\mathcal{D}_{2}$ and $\mathcal{D}_{3}$ also involve
hypergeometric functions. 
With these formulas, or by direct
evaluation of (\ref{eq:partition_function_poisson}), functionals $D_C(\gamma)$ and $D_G(\gamma)$ can be
evaluated and maximized on their domains of definition so as to find the optimum value $\gamma^*$. 

Given the signs of $\nu$ and $\gamma$, and the supports $\mathcal{D}_{1}$, $\mathcal{D}_{2}$ and $\mathcal{D}_{3}$, 
it is already possible to deduce that the solution $\gamma^*$ is necessarily in a specific interval. Hence, we obtain here that for $\nu>0$ (respectively for $\nu<0$), solutions associated to a constraint  $m>\mu$ corresponds to case (a) (resp. case (c)) and that solutions for $m<\mu$ correspond to case (c) (resp. case (a)). The argumentation relies on the fact that if $P_i$ and $P_j$ are two optimum distributions with supports $\mathcal{D}_{i}$ and $\mathcal{D}_{j}$, with the same (generalized) mean  but different parameters, then by Theorem \ref{theo_solgen} $D_\alpha(P_j||Q)\ge D_\alpha(P_i||Q)$ if $P_j$ is dominated by $P_i$.

In the case $m>\mu$, $\nu<0$, 
the solution with minimum divergence is for a distribution $P_3$ in case (c), and furthermore we have $D_{\alpha}(P_{3\text{ }}||Q)\rightarrow0$. 
This can be seen as follows. Let $x\in\mathcal{D}_{3}$ and $k=\left\lfloor m-\frac{1}{\gamma}\right\rfloor
,$ so that $x<k+1.$ Let now $\gamma=\frac{1}{m-k}+\epsilon$ with $\epsilon
\in\left(  0,\frac{1}{m-k-1}-\frac{1}{m-k}\right).$ 
Then the mean of the distribution is given by 
\begin{equation}
E_{\nu}\left[  X\right]  =\frac{1}{Z_{\nu}(\gamma,m)}\sum_{x=0}^{k-1}x\left[
\frac{k-x}{k-m}\right]  ^{\nu}Q(x)+k\frac{\left[  k-m\right]  ^{\nu}}{Z_{\nu
}(\gamma,m)}\epsilon^{\nu}Q(k),
\end{equation}
and any value higher than $\mu=E_{Q}[X]$ can be obtained by tuning $\epsilon,$ for many
values of $k$.
When $k$ increases$,\gamma=\frac{1}{m-k}$ tends to $0$ by lower values and $P_3$ tends to $Q$, which results in $D_{\alpha }(P_{3\text{ }}||Q)\rightarrow0.$

The $\nu<0$ case has the specificity that $Z_{\nu}(\gamma,m)$ exhibits singularities at
$\gamma=\frac{1}{m-k}$ for all $k\geq0$. Then $Z_{\nu}(\gamma,m)$, with $\nu=-(\xi+1)$ or $\nu=\xi$, 
is only convex on intervals $[\frac{1}{m-k},\frac{1}{m-k-1}]$ or $[\frac{1}{m-k-1},\frac
{1}{m-k}]$ (for $k+1>m>k$), with $Z_{\nu}(\gamma,m)=+\infty$ on the bounds of
each interval. Consequently, $-\log Z_{\nu}(\gamma,m)$ may present several
maxima. This is illustrated in Fig. \ref{Fig_Pois_dual_exampl} where 
function $D_C(\gamma)$ with $\alpha=0.5$
presents many extrema. The solution with minimum Rényi divergence 
corresponds to the minimum of these maxima. 


The limit case $\alpha\rightarrow1$ is obtained with $\left\vert
\nu\right\vert =\left\vert \xi\right\vert $ $\rightarrow+\infty$. \ 
According to the discussion above, the optimum $\gamma$ corresponds to case (a)
 for $\{m>\mu,\nu>0\}$ and $\{m<\mu,$ $\nu<0\},$ and to case (c)
for $\{m<\mu,\nu>0\}.$ For case (a), the support is $\mathcal{D}_{1},$ and the derivative of the partition
function $Z_{\nu}(\gamma,m)$ is given by (\ref{eq:derivative_partition_function_poisson_D1}). In this derivative, the sum can be rewritten as
\begin{equation}
\begin{array}{l}
\displaystyle{\sum_{x=0}^{+\infty}\left(  x-m\right)  \left(  1+\theta x\right)  ^{\nu}%
\frac{\mu^{x}}{x!}} 
\displaystyle{=\sum_{x=0}^{+\infty}\left(  \mu\left(  1+\theta
x+\theta\right)  ^{\nu}-m\left(  1+\theta x\right)  ^{\nu}\right)  \frac
{\mu^{x}}{x!}},%
\end{array}
\label{eqequal}
\end{equation}
so that $Z_{\nu+1}(\gamma,m)$ is minimum when the RHS of (\ref{eqequal}) is equal to zero.
%
We have to solve this equation in $\theta.$ Suppose that $\theta$ is small and
that $\theta x\ll1$ for the significative values of the probability
distribution.
In
this case, we use the approximation $\left(  1+\theta x\right)  ^{\nu}%
=e^{\nu\log\left(  1+\theta x\right)  }\approx e^{\nu\theta x},$  that
leads to 
\begin{equation}
\sum_{x=0}^{+\infty}\left(  \mu e^{\nu\theta(x+1)}-me^{\nu\theta x}\right)
\frac{\mu^{x}}{x!}=e^{\mu e^{\nu\theta}}\left(  \mu e^{\nu\theta}-m\right)=0
\end{equation}
The solution is given by $\theta^{\ast}=\frac{1}{\nu}\log(\frac{m}{\mu}),$
that in turn provides
\begin{equation}
\gamma^{\ast}=\frac{\ln\frac{m}{\mu}}{\nu+m\ln\frac{m}{\mu}}.
\label{eq:gamma_opt_poisson}%
\end{equation}
In case (a), $\gamma$ is positive, and this will be true for
$\gamma^{\ast}$ if $\{m>\mu,\nu>0\}$ or $\{m<\mu,\nu<0\}$. 
For the log-partition function, when $|\nu|\rightarrow+\infty$, this leads to%
\begin{equation}
-\log\ Z_{\nu+1}(\gamma^{\ast},m)
\approx m\log\frac{m}{\mu}+(\mu-m).
\end{equation}

In domain $\mathcal{D}_{3},$ the derivative of the partition function $Z_{\nu
}(\gamma,m)$ is equal to zero if
\[
\sum_{x=0}^{k}\left(  x-m\right)  \left(  1+\theta x\right)  ^{\nu}\frac
{\mu^{x}}{x!}=0,\text{ with  \ } k=\left\lfloor m-\frac{1}{\gamma
}\right\rfloor, \gamma<0.
\]
If $\gamma$ is small enough, $k\rightarrow+\infty$ and we obtain for $\nu>0$
the same formulation and solution as in $\mathcal{D}_{1}$ 
The solution $\gamma^{\ast}$ in
(\ref{eq:gamma_opt_poisson}) is now negative, that imposes $m<\mu$ for
$\nu>0.$ Finally, we have shown above that if $m>\mu$ with $\nu<0$ then
$D_{\alpha}(P_{3\text{ }}||Q)\rightarrow0.$

Hence, we obtain that the entropy functionals converge to
\begin{equation}
 \mathcal{F}_{\alpha\rightarrow1}^{(.)}(x) = x\ln\frac{x}{\mu}+(\mu-x)
\end{equation}
with the restriction that $\mathcal{F}_{\alpha}^{(.)}(x)=0$ for $x>\mu$ if (C) $\alpha<1$ or (G) $\alpha>1$. 

This functional is simply the cross-entropy between $x$ and $\mu$ or
Kullback-Leibler (Shannon) entropy functional with respect to\emph{\ }$\mu$
\cite{Csiszar1991}. \ It measures a \ `distance' between a
possible mean (observable) and a reference mean $\mu$, and it has been used as a
regularization functional in several applied problems, such as astronomy,
tomography, RMN, and spectrometry.

As in the previous cases, the entropy functionals $\mathcal{F}_{\alpha}^{(C)}(x)$ and $\mathcal{F}%
_{\alpha}^{(G)}(x)$ can be evaluated numerically.  For instance, $\mathcal{F}%
_{\alpha}^{(G)}(x)$ is given on  Fig. \ref{Fig_Pois_generalized}
for $\mu=3$. It presents an unique minimum for $m=\mu,$ and we note that it is is not convex for small values of $\alpha.$



\begin{figure}
 \begin{minipage}[b]{.46\linewidth}
 \begin{center}
\includegraphics[width=3in, height=2.1in]{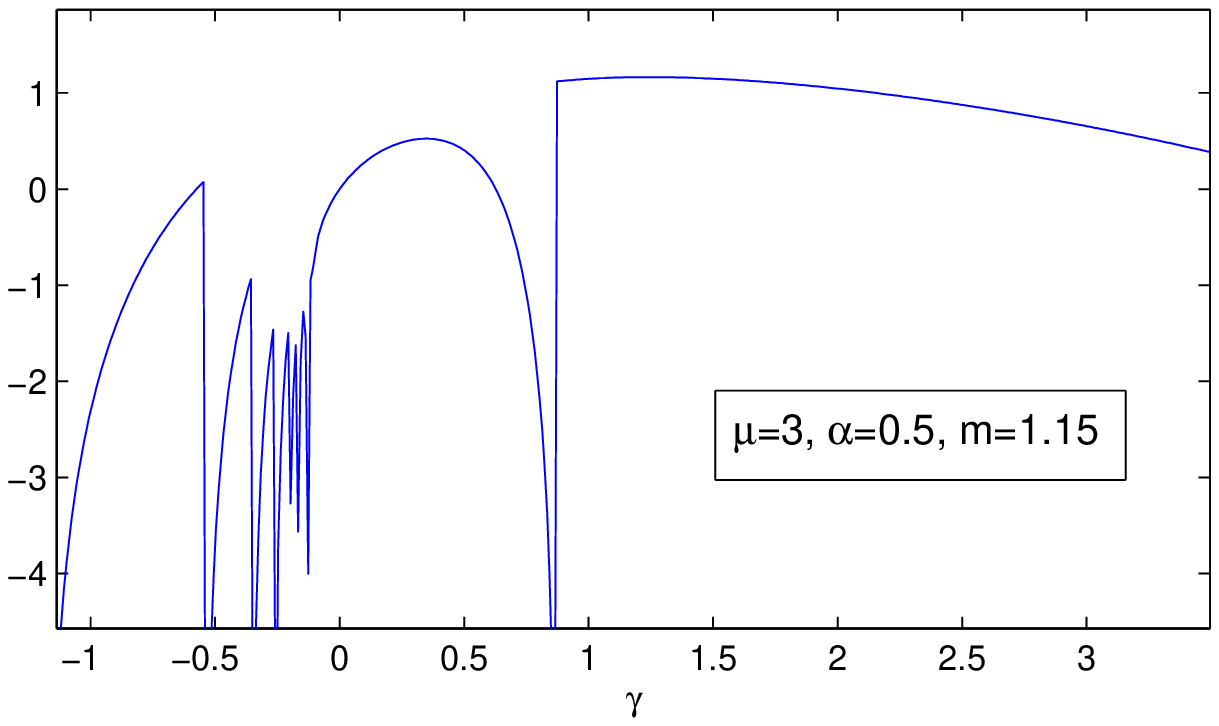}
\end{center}
\caption{Example of functional $D_C(\gamma)$ for the Poisson reference with classical
mean constraint, with $\mu=3, \alpha=0.5$ ($\xi=-2$) and $m=1.15$. It presents 
singularities at $1/(m-k)$, $\forall k$, and maxima 
at $\gamma=0.35$ and $\gamma=1.24$.}%
\label{Fig_Pois_dual_exampl}%
 \end{minipage} \hfill
 \begin{minipage}[b]{.46\linewidth}
  \begin{center}
\includegraphics[width=3in, height=2.1in]{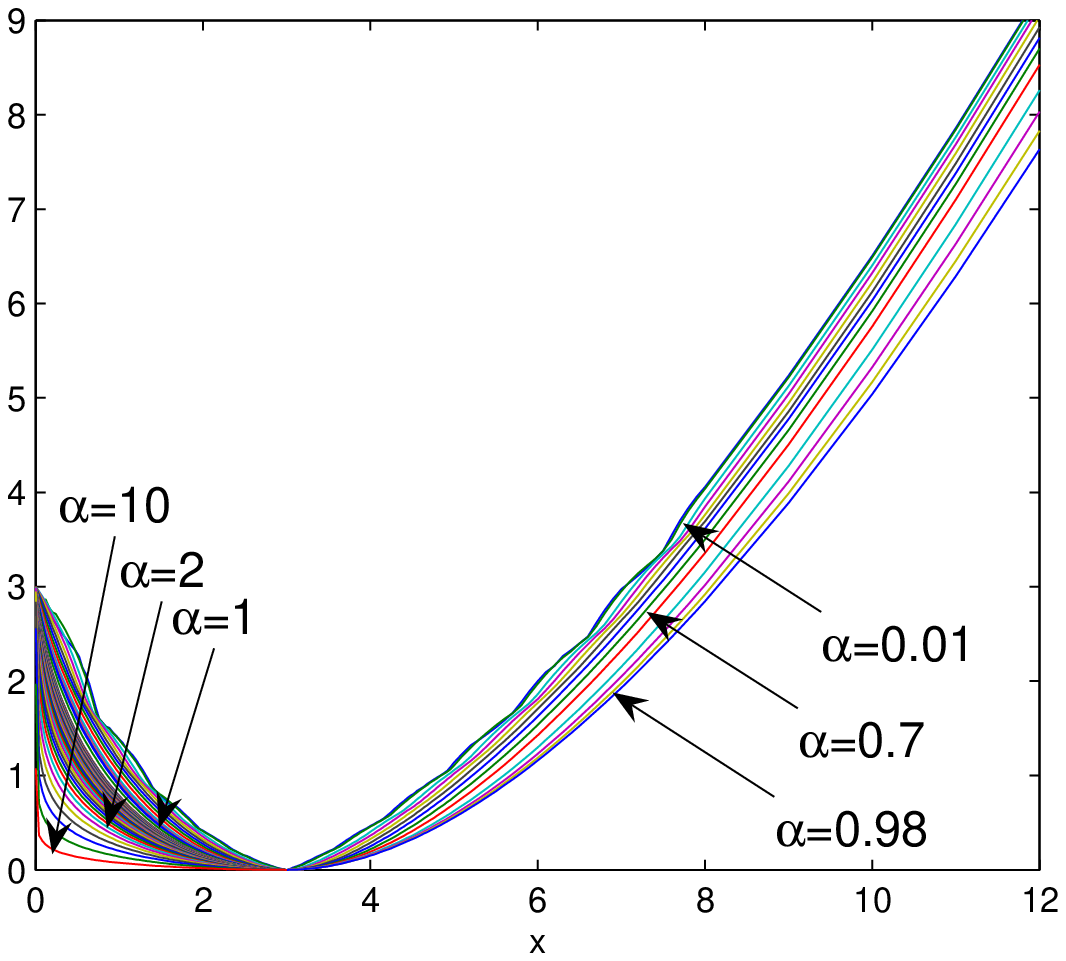} 
\end{center}
\caption{Entropy functional $\mathcal{F}_{\alpha}^{(G)}(x)$ for a Poisson
reference measure with $\mu=3$ and $\alpha\ge0$. For $\alpha\rightarrow 1$, $\mathcal{F}_{\alpha}^{(.)}(x)$ 
converges to $x\ln\frac{x}{\mu}+(\mu-x)$.}%
\label{Fig_Pois_generalized}%
 \end{minipage}
\end{figure}




\section{Conclusion and future work}

By weakening one of the postulates that lead to the definition of Shannon entropy, Rényi \cite{Renyi1961} introduced a one parameter family of entropy and divergence. Shannon entropy and Kullback-Leibler divergence are recovered in the limiting case for the parameter $\alpha\rightarrow1$. In this work, we considered the maximum entropy problems associated with Rényi $Q$-entropies. We characterized the solutions for a standard mean constraint and for the generalized mean constraint of nonextensive statistics. We defined and discussed the entropy functionals as a function of the constraints. These entropies were characterized and various properties and relationships were highlighted. We also discussed numerical aspects. Finally we illustrated this setting through some specific examples and recovered some well-kown entropy functionals. 

Future work will consider the extension of this setting in the multivariate case. An issue that should be examined is the fact that the direct multivariate extension of (\ref{eq:defPnu}) is not separable in the case of a separable reference $Q(x)$; which means that some dependances are implicitely introduced in the maximum entropy solution.

We also intend to investigate a possible underlying geometrical structure of the maximum entropy distributions (\ref{eq:defPnu}). This structure should extend the geometrical structure of exponential families and involve the Bregman-like divergence introduced by (\ref{eq:bregman-like}). 

Finally, maximum entropy methods have been successfully employed for solving inverse problems. We intend to consider the potential of Rényi entropies and divergence in this field. A simple contribution would be to examine the interest of a Rényi entropy functional, e.g. (\ref{eq:FGbern}),  as a potential in a Markov field for image deconvolution or restoration.

\appendix

\section{Proof of Theorem \ref{theo_solgen}}
\label{proof:theo} 
Let us begin with the classical constraint (C). In this first case, we follow the approach of \cite{Vignat2004}. Consider the functional Bregman divergence :
\[
 B_h(f,g)=\int d(f,g) h(x) dx = \int -\left( f(x)^{\alpha} - g(x)^{\alpha} - {\alpha}\left(f(x)-g(x)\right)g(x)^{\alpha-1} \right) h(x) dx
\]
 where $h(x)$ is a nonnegative functional, associated to the (pointwise) Bregman divergence $d(f,g)$ built upon the strictly convex function $-x^{\alpha}$ for $\alpha\in(0,1)$. Then
\begin{align}
B_{Q^{1-\alpha}}(P,P_C)  & = -\int_\mathcal{S} P(x)^{\alpha} - P_C(x)^{\alpha} - {\alpha}(P(x) P_C(x)^{\alpha-1}-P_C(x)^{\alpha}) Q(x)^{1-\alpha} dx\\
& = -\int_\mathcal{S} P(x)^{\alpha}Q(x)^{1-\alpha} dx + \int_\mathcal{S} {P_C}(x)^{\alpha}Q(x)^{1-\alpha} dx. \label{eq:Dpos}
\end{align}
with $h(x)=Q(x)^{1-\alpha}$ and where $\mathcal{S}$ denotes the support of $P_C(x)$. The second line follows from the fact that when $P$ and $P_C$ have the same mean $\bar{x}=E_{P_C}[X]=E_{P}[X]$, then using the expression in (\ref{eq:defPnu}) with $\nu=\xi=\frac{1}{\alpha-1}$ it is possible to check that
\[
\int_\mathcal{S} P(x) P_C(x)^{\alpha-1} Q(x)^{1-\alpha}dx = \int_\mathcal{S} P_C(x)^{\alpha} Q(x)^{1-\alpha}dx = Z_\xi(\gamma,\bar{x})^{-\alpha}
\]
 %
%
%
provided the whole support of $P(x)$ is included in $\mathcal{S}$, which is the case by the absolute continuity of $P(x)$ with respect to $P_C(x)$.

The Bregman divergence $B_{Q^{1-\alpha}}(P,P_C)$ being always positive and equal to zero if and only if $P=P_C$, the equality (\ref{eq:Dpos}) implies that, for $\alpha\in(0,1)$,
\begin{equation}
  D_\alpha(P||Q) \geq D_\alpha(P_C||Q)
\end{equation}
which means that $P_C$ is the distribution with minimum Rényi (Tsallis) divergence to $Q$, in the set of all distributions $P \ll P_C$ with a given mean $\bar{x}$, for $\alpha\in(0,1)$. The case $\alpha>1$ can be derived accordingly, beginning with the Bregman divergence associated to the strictly convex function $x^\alpha$.\\

As far as the generalized mean constraint (G) is concerned, let us now consider the Rényi information divergence $D_\alpha(P||P_G)$ from $P$ to $P_G$, with $P_G$ given in (\ref{eq:defPnu}) with $\nu=-\xi=\frac{1}{1-\alpha}$
\begin{equation}
 (\alpha-1)   D_\alpha(P||P_G)  = \log \int_\mathcal{S} P(x)^\alpha P_G(x)^{1-\alpha} dx, \\
\end{equation}
with $\mathcal{S}$ the support of $P_G(x)$, and which can be rearranged as
\begin{align}
(\alpha-1)  D_\alpha(P||P_G)  & = \log \int_\mathcal{S} \frac{P(x)^\alpha{Q(x)^{1-\alpha}}}{\int_\mathcal{S} P(x)^\alpha{Q(x)^{1-\alpha}} dx } \left[  \gamma(x-\overline{x})+1\right] dx \\
& + \log \int_\mathcal{S} P(x)^\alpha{Q(x)^{1-\alpha}}dx - (1-\alpha) \log Z_{\frac{1}{1-\alpha}}(\gamma,\overline{x}).
\end{align}
The generalized mean with respect to $P$ appears in the first term, and cancels if $P$ and $P_G$ have the same generalized mean $\bar{x}$ and $P_G \gg P$. In such a case, we obtain
\begin{align}
D_\alpha(P||P_G) & = \frac{1}{(\alpha-1)}
\log \int P(x)^\alpha Q^{1-\alpha}dx  +  \log Z_{\frac{1}{1-\alpha}}(\gamma,\overline{x})\\
& = D_\alpha(P||Q) - D_\alpha(P_G||Q), \label{eq:pythagorean}
\end{align}
where we used the fact that $D_\alpha(P_G||Q)=-\log Z_{\frac{1}{1-\alpha}}(\gamma,\overline{x})$ as stated in Proposition \ref{prop:val_opt_renyi}. Since the Rényi information divergence is always greater or equal to zero, we have
\begin{gather}
 D_\alpha(P||Q) \geq D_\alpha(P_G||Q)
\end{gather}
and conclude that $P_G$ is the distribution with minimum Rényi (Tsallis) divergence to $Q$, in the set of all distributions $P \ll P_G$ with a given generalized $\alpha$-mean $\bar{x}$.

Finally, it is easy to check, given the expression of $P_G$ and the fact that $\alpha\xi=\xi+1$, that the generalized mean of $P_G$ is also the standard mean of the distribution with exponent $\nu=-(\xi+1)$, that is $E_{P_G}^{(\alpha)}\left[X\right]=E_{P_G^{\ast}}[X]=E_{-(\xi+1)}[X]$.
\\
Note that the equality in (\ref{eq:pythagorean}), $ D_\alpha(P||Q) = D_\alpha(P||P_G) + D_\alpha(P_G||Q)$, is a pythagorean equality, which means that $P_G$ is the orthogonal projection of $P$ on the set of probability distributions with fixed generalized mean $\bar{x}$.\\

\section{Proof of Proposition \ref{prop:derivate}}
\label{proof:propderivate}

The exact behaviour depends on the reference distribution $Q(x)$ and on the sign of the exponent $\nu$. Because the domain of definition $\mathcal{D}$ might depend on $\gamma$, the derivative of the partition function writes
\[
\frac{dZ_{\nu}(\gamma,\bar{x}_{\gamma})}{d\gamma}=\lim_{\delta\gamma\rightarrow0}\frac{1}{\delta\gamma}\left(Z_{\nu}(\gamma+\delta\gamma,\bar{x}_{\gamma+\delta\gamma})-Z_{\nu}(\gamma,\bar{x}_{\gamma})\right)
\]
where $\bar{x}_{\gamma}$ and $\bar{x}_{\gamma+\delta\gamma}$ now denote the parameter $\bar{x}$ for distributions with parameter $\gamma$ and ${\gamma+\delta\gamma}$.  
Let us begin with the continuous case. If $\delta\mathcal{D}$ denotes the domain increment associated to the variation $\delta\gamma$, it remains
\begin{align}
\frac{dZ_{\nu}(\gamma,\bar{x}_{\gamma})}{d\gamma} & =\int_{\mathcal{D}}\frac{d}{d\gamma}\left(1+\gamma\left(x-\bar{x}_{\gamma}\right)\right)^{\nu}Q(x)\dx \label{eq:dZwithoutincrement}\\
 & +\lim_{\delta\gamma\rightarrow0}\frac{1}{\delta\gamma}\int_{\mathcal{\delta\mathcal{D}}}\left(1+(\gamma+\delta\gamma)\left(x-\bar{x}_{\gamma+\delta\gamma}\right)\right)^{\nu}Q(x) \dx \label{eq:dZwithincrement}
\end{align}
 Of course, when $\mathcal{D}$ does not depend on $\gamma$, we only
have the first term, and it is easy to obtain (\ref{eq:Zderivate}). Otherwise, in order to satisfy the positivity of the integrand, the domain $\mathcal{D}$ is bounded above by $\left(\bar{x}_{\gamma}-\frac{1}{{\gamma}}\right)$ for $\gamma<0$ and below by the same value for $\gamma>0$. Then, the second integral, say $G$, can be expressed as
\def\sign#1{{\mathrm{sign}\left(#1\right)}}
\begin{align}
G & =\sign{\gamma}\int_{\bar{x}_{\gamma+\delta\gamma}-\frac{1}{{\gamma+\delta\gamma}}}^{\bar{x}_{\gamma}-\frac{1}{{\gamma}}}\left(1+(\gamma+\delta\gamma)\left(x-\bar{x}_{\gamma+\delta\gamma}\right)\right)^{\nu}Q(x)\dx\\
 & =\frac{\sign{\gamma}}{\gamma+\delta\gamma}\int_{0}^{a}y^{\nu}Q\left(\frac{y-1}{{\gamma+\delta\gamma}}+\bar{x}_{\gamma+\delta\gamma}\right)\mathrm{d}y
\end{align}
with $a=(\gamma+\delta\gamma)\left(\bar{x}_{\gamma+\delta\gamma}-\bar{x}_{\gamma}\right)-\frac{\delta\gamma}{\gamma}$,
that tends to zero with $\delta\gamma$ if $\bar{x}_{\gamma}$ is
continuous.
At first order,
we then obtain
\[
G=\sign{\gamma}\frac{Q\left(\bar{x}_{\gamma+\delta\gamma}-\frac{1}{{\gamma+\delta\gamma}}\right)}{\gamma+\delta\gamma}
\int_{0}^{a}y^{\nu}\mathrm{d}y\propto
\frac{a^{1+\nu}}{1+\nu}
\]
for $\nu>-1$.
Then, it is readily checked that $\lim_{\delta\gamma\rightarrow0}\frac{1}{\delta\gamma}G=0$ for $\nu>0$, so that (\ref{eq:dZwithincrement}) is always zero for $\nu>0$ and (\ref{eq:Zderivate}) is true.
%
%
~\\
In the discrete case, the partition function is
\[
Z_{\nu}(\gamma,\bar{x}_{\gamma})=\sum_{x\in\mathcal{D}}\left(1+\gamma\left(x-\bar{x}_{\gamma}\right)\right)^{\nu}Q(x)
\]

There exists singular isolated values of $\gamma$ such that $1+\gamma\left(x-\bar{x}_{\gamma}\right)=0$, for $x$ integer. For such values, the corresponding term in the partition function diverges for $\nu<0$. Contrary to the continuous case where the domain of $\gamma$ is contiguous,  the domain of values of $\gamma$ ensuring that the partition function is finite will be interrupted by isolated values of $\gamma$: the domain of possible $\gamma$ will be constituted of segments.
%

As in the continous case, the derivative of the partition function
writes as the sum of two terms, the second one involving a domain
increment \begin{align}
\frac{dZ_{\nu}(\gamma,\bar{x}_{\gamma})}{d\gamma} & =\sum_{\mathcal{D}}\frac{d}{d\gamma}\left(1+\gamma\left(x-\bar{x}_{\gamma}\right)\right)^{\nu}Q(x) \label{eq:Zderivate_discret}\\
 & +\lim_{\delta\gamma\rightarrow0}\frac{1}{\delta\gamma}\sum_{\mathcal{\delta\mathcal{D}}}\left(1+(\gamma+\delta\gamma)\left(x-\bar{x}_{\gamma+\delta\gamma}\right)\right)^{\nu}Q(x) \label{eq:Zderivete_increment_discret}
\end{align}
If $\mathcal{D}$ does not depend on $\gamma$, there is no domain increment and
the derivative is given by (\ref{eq:Zderivate_discret}). When the bounds of $ \mathcal{D}$ depend of $\gamma$, the domain increment is given by the integers in the interval
$\left(\lceil\bar{x}_{\gamma+\delta\gamma}-\frac{1}{{\gamma+\delta\gamma}}\rceil,\lceil{\bar{x}_{\gamma}-\frac{1}{{\gamma}}}\rceil\right)$
( $\gamma>0$) or $\left(\lfloor{\bar{x}_{\gamma}-\frac{1}{{\gamma}}}\rfloor,\lfloor\bar{x}_{\gamma+\delta\gamma}-\frac{1}{{\gamma+\delta\gamma}}\rfloor\right)$
($\gamma<0$); where $\lfloor x\rfloor$ is the floor function that
returns the largest integer less than or equal to x; and $\lceil x\rceil$
is the ceil function, the smallest integer not less than $x$. If $\gamma$ belongs in some interval such that the domain increment remains empty, then
the derivative is of course simply (\ref{eq:Zderivate_discret}). An extension will occur for an infinitesimal variation $\delta\gamma$ if  $\bar{x}_{\gamma}-\frac{1}{{\gamma}}$
is precisely an integer, say $k$,

%
%
%
%


%
Then,  the second sum reduces to
\begin{align}
G= & \left(1+(\gamma+\delta\gamma)\left(k-\bar{x}_{\gamma+\delta\gamma}\right)\right)^{\nu}Q(k)\\
= & \left(-\frac{\delta\gamma}{\gamma}-(\gamma+\delta\gamma)\left(\bar{x}_{\gamma+\delta\gamma}-\bar{x}_{\gamma} \right)\right)^{\nu}Q(k),\end{align}
and finally \begin{align}
 & \lim_{\delta\gamma\rightarrow0}\frac{1}{\delta\gamma}G
 =\lim_{\delta\gamma\rightarrow0}{\delta\gamma}^{\nu-1}\left(\left(\gamma+\delta\gamma\right)\frac{\left(\bar{x}_{\gamma+\delta\gamma}-\bar{x}_{\gamma}\right)}{\delta\gamma}-\frac{1}{\gamma}\right)^{1+\nu}
 =0\text{~~~ for $\nu>1$.}\end{align}
since all terms in the parenthesis remains finite when $\delta\gamma\rightarrow 0$. In such case the derivative reduces to (\ref{eq:Zderivate_discret}) and (\ref{eq:Zderivate}) is true.

\bibliographystyle{elsart-num-sort}
\bibliography{sub_bib}

\end{document}